  \providecommand\BibTeX{{%
    \normalfont B\kern-0.5em{\scshape i\kern-0.25em b}\kern-0.8em\TeX}}}
\newtheorem{theorem}{Theorem}
\newtheorem{lemma}[theorem]{Lemma}
\theoremstyle{definition}
\newtheorem{definition}[theorem]{Definition}
\DeclareMathOperator*{\argmax}{arg\,max}
\begin{document}

\title[Enforcing Group Fairness in Algorithmic Decision Making]{Enforcing Group Fairness in Algorithmic Decision Making: Utility Maximization Under Sufficiency}

\author{Joachim Baumann}
\email{baumann@ifi.uzh.ch}
\orcid{0000-0003-2019-4829}
\affiliation{%
  \institution{University of Zurich}
  \city{Zurich}
  \country{Switzerland}
}
\affiliation{%
  \institution{Zurich University of Applied Sciences}
  \city{Zurich}
  \country{Switzerland}
}

\author{Anik\'{o} Hann\'{a}k}
\email{hannak@ifi.uzh.ch}
\orcid{0000-0002-0612-6320}
\affiliation{%
  \institution{University of Zurich}
  \city{Zurich}
  \country{Switzerland}
}

\author{Christoph Heitz}
\email{christoph.heitz@zhaw.ch}
\orcid{0000-0002-6683-4150}
\affiliation{%
  \institution{Zurich University of Applied Sciences}
  \city{Zurich}
  \country{Switzerland}
}

\renewcommand{\shortauthors}{Baumann et al.}

\begin{abstract}
Binary decision making classifiers are not fair by default. Fairness requirements are an additional element to the decision making rationale, which is typically driven by maximizing some utility function. In that sense, algorithmic fairness can be formulated as a constrained optimization problem. This paper contributes to the discussion on how to implement fairness, focusing on the fairness concepts of positive predictive value (PPV) parity, false omission rate (FOR) parity, and sufficiency (which combines the former two).

We show that group-specific threshold rules are optimal for PPV parity and FOR parity, similar to well-known results for other group fairness criteria. However, depending on the underlying population distributions and the utility function, we find that sometimes an upper-bound threshold rule for one group is optimal: utility maximization under PPV parity (or FOR parity) might thus lead to selecting the individuals with the smallest utility for one group, instead of selecting the most promising individuals. This result is counter-intuitive and in contrast to the analogous solutions for statistical parity and equality of opportunity.

We also provide a solution for the optimal decision rules satisfying the fairness constraint sufficiency. We show that more complex decision rules are required and that this leads to within-group unfairness for all but one of the groups. We illustrate our findings based on simulated and real data.

\end{abstract}

\begin{CCSXML}
<ccs2012>
   <concept>
       <concept_id>10010147.10010257</concept_id>
       <concept_desc>Computing methodologies~Machine learning</concept_desc>
       <concept_significance>500</concept_significance>
       </concept>
   <concept>
       <concept_id>10010405.10010481.10010484</concept_id>
       <concept_desc>Applied computing~Decision analysis</concept_desc>
       <concept_significance>500</concept_significance>
       </concept>
 </ccs2012>
\end{CCSXML}

\ccsdesc[500]{Computing methodologies~Machine learning}
\ccsdesc[500]{Applied computing~Decision analysis}

\keywords{algorithmic fairness, prediction-based decision making, constrained utility optimization, sufficiency, machine learning, group fairness metrics, fairness trade-offs}

\maketitle

\section{Introduction}

Advances in machine learning (ML) have led to a rise in algorithmic decision making systems that assist or replace humans to make consequential decisions.
Today, such algorithms are used in various domains, such as credit lending~\cite{Fuster2017,kozodoi2022credit-scoring}, pretrial detention~\cite{angwin2016machine}, hiring~\cite{Millera}, and many more.
It has been shown that this often violates fairness across protected groups~\cite{barocas-hardt-narayanan}.
This is especially worrying if the prediction-based decision systems systematically harm marginalized groups, and, in particular, if they are applied in domains where a decision is potentially life-changing for the affected individuals~\cite{10.2307/24758720}.
A potential way to reduce ML-based discrimination is to mitigate outcome disparities across some predefined groups~\cite{hardt2016equality,10.1145/3097983.3098095,Corbett-Davies2018,pmlr-v81-menon18a,pmlr-v81-dwork18a,lipton2018does,barocas-hardt-narayanan}.
In order to measure and eventually correct for these disparities, different mathematical notions of so-called \textit{group fairness metrics} have been proposed~\cite{narayanan2018translation,verma2018fairness}.
The group fairness metrics that have attracted the most interest are independence, separation, and sufficiency~\cite{barocas-hardt-narayanan}.
These three definitions of fairness are all ``entirely reasonable and desirable''~\cite{Kearns2019EthicalAlgorithm}, however, they are mutually exclusive except for in highly constrained cases, which are unlikely to occur in practice~\cite{Kleinberg2016,Friedler2016,Chouldechova2017,Pleiss2017}.
Hence, decision makers must choose one metric over the others.

In this paper, we focus on the fairness of prediction-based decision systems that take decisions based on the prediction of a variable $Y$, which is unknown at the time of decision making.
Different methods have been developed to ensure the fairness of such systems, most of which fall into one of three categories: pre-processing, in-processing, or post-processing~\cite{pessach2020algorithmic,caton2020fairness}.
One line of papers within the field of algorithmic fairness is concerned with optimal decision rules satisfying some group fairness constraint~\cite{hardt2016equality,10.1145/3097983.3098095,lipton2018does,10.1145/3097983.3098095,pmlr-v81-menon18a}.
Thereby, the prediction model is treated as given, but the decision maker has the freedom to modify the decision rule for fulfilling fairness constraints, i.e., the predictions are post-processed so that the resulting decisions are fair w.r.t. a specified protected attribute.
Following this approach, we formulate the goal of fairness as a constrained optimization problem where a standard approach is to assume that the decision maker's goal is to maximize some utility%
\footnote{
We will define the term utility and formulate the constrained (as well as the unconstrained) optimization problem in Section~\ref{ssec:OptimalityofDecisionRulesWithorWithoutFairnessConstraints}.
}
function while also satisfying some fairness constraint~\cite{mitchell2021algorithmic}.
Such optimal decision rules have been derived for the group fairness metrics statistical parity, conditional statistical parity, TPR parity, and FPR parity~\cite{hardt2016equality,10.1145/3097983.3098095,lipton2018does}.
It has been shown that optimal decision rules that satisfy these fairness constraints are characterized by lower-bound threshold rules.%
\footnote{
In the fair ML literature, so-called \textit{thresholding} is arguably the most typical decision rule for probabilistic classifiers, also because of its conceptual similarity to the way humans take decisions~\cite{kleinberg2017humandecisions,caton2020fairness}.
In this paper, we refer to this type of decision rule as a \textit{lower-bound threshold rule}.
}
Surprisingly, to our knowledge, no such solution has been derived for the group fairness metrics PPV parity, FOR parity, and sufficiency.
This paper closes this gap by deriving optimal decision rules for these group fairness metrics.
Our main contributions and findings are:
\begin{itemize}
    \item We show that optimal decision rules satisfying PPV parity or FOR parity take the form of group-specific (lower-bound or upper-bound) thresholds.
    \item We find that, surprisingly, under PPV parity or FOR parity, it can be optimal for decision makers to apply an upper-bound threshold for one group (depending on the populations and the applied utility function). In such situations, the most promising individuals are left out, leading to an extreme form of within-group unfairness.
    \item We provide a solution for the optimal decision rules that satisfy sufficiency as the combination of both PPV parity and FOR parity. We find that this definition of fairness requires more complex decision rules (i.e., decision rules that do not take the form of a simple lower- or upper-bound threshold) and leads to within-group unfairness for all but one of the groups.
    \item We highlight the trade-off between fairness across groups and within groups. 
\end{itemize}

The remainder of the paper is structured as follows:
Section~\ref{sec:related-work} introduces the most important group fairness metrics and provides the necessary background.
In Section~\ref{sec:optimal-decisions-under-fairness-constraints}, we formalize the (un)constrained optimization problem and solve it for several group fairness metrics.
Section~\ref{sec:IllustrativeExamples} demonstrates the solutions for optimal decision rules under these fairness constraints based on simulated and real data.
Section~\ref{sec:Conclusions} concludes the paper.

\section{Related Work}
\label{sec:related-work}

\subsection{Group Fairness Metrics}
\label{ssec:Related-Work-Group-Fairness-Metrics}

Much of the technical literature on algorithmic fairness strives to create some generalized notion of fairness in terms of the impact an algorithm has on different groups~\cite{Dwork2012,10.1145/2783258.2783311,10.2307/24758720,Chouldechova2017,10.1145/3038912.3052660,Corbett-Davies2018,berk2021criminal}.
As ML algorithms are used more and more for consequential decision making, their impact on individuals and groups may be tremendous.
Numerous metrics have been suggested to quantify the group fairness of decision making algorithms~\cite{narayanan2018translation}.
Most of these group fairness criteria fall into one of three categories: independence, separation, or sufficiency~\cite{barocas-hardt-narayanan}.
Table~\ref{tab:Groupfairnessmetrics} provides the mathematical definitions for those three criteria%
\footnote{
See Section~\ref{ssec:ProblemStatementandNotation} for a description of the notations used for the equations.
}.

Independence -- also called \textit{statistical parity}~\cite{Dwork2012} -- compares decision rates across groups (i.e., the fraction of individuals who are granted a loan in each group), whereas the other two criteria compare error rates across groups~\cite{verma2018fairness}.
Conditional statistical parity extends this definition of fairness by allowing a set of legitimate features to affect the decision~\cite{Kamiran2013,10.1145/3097983.3098095}.
True positive rate (TPR) parity -- also called \textit{equal opportunity}~\cite{hardt2016equality} -- and false positive rate (FPR) parity are relaxations of the separation criterion.
Positive predictive value (PPV) parity -- also called \textit{predictive parity}~\cite{Chouldechova2017} -- and false omission rate (FOR) parity are relaxations of the sufficiency criterion -- which has also been called \textit{conditional use accuracy equality} by~\cite{berk2021criminal} or \textit{overall predictive parity} by~\cite{mayson2018bias}.
There is an essential difference between separation and sufficiency:
TPR and FPR focus on a subpopulation that is defined by $Y$.
In contrast, PPV (also called \textit{precision}) and FOR focus on a subpopulation that is defined by $D$.%
\footnote{
All four metrics can be expressed by their respective complements:
PPV parity is equivalent to false discovery rate parity, FOR parity is equivalent to negative predictive value parity, TPR parity is equivalent to false negative rate parity, and FPR parity corresponds to true negative rate parity.
}
In the loan granting scenario, the TPR denotes the fraction of those individuals who are granted a loan from all those who would not default.
For the PPV, on the other hand, only those individuals who are granted a loan are considered to measure the fraction of individuals who repay it.

PPV parity, FOR parity, and sufficiency are relevant notions of fairness, not only theoretically but also in practice.
Most prominent is probably the case of the 2016 debate surrounding the tool COMPAS (which gives judges recidivism risk predictions that are supposed to inform them on whether or not a defendant should be released in different stages of the criminal justice system), where~\cite{angwin2016machine} published an article saying that the tool systematically disadvantages black defendants because of a FPR disparity.
However, Northpointe (the developers of COMPAS) responded that the two metrics TPR parity and FPR parity are not appropriate for assessing recidivism risk scales and that instead PPV parity and FOR parity are appropriate criteria~\cite{Dieterich2016}.
They conclude that their tool is not unfair because it satisfies those two metrics.
In addition to recidivism prediction, PPV parity is also prevalent in predictive policing~\cite{Simoiu2017} (where the metric is usually called \textit{hit rate}) and in personalized online ads (where the notion of \textit{click through rates}~\cite{wang2011click}, which is an equivalent metric, is omnipresent).
\begin{table*}
  \caption{Group fairness metrics. The acronyms stand for true positive rate (TPR), false positive rate (FPR), positive predictive value (PPV), and false omission rate (FOR).}
  \label{tab:Groupfairnessmetrics}
  \begin{tabular}{@{}lll@{}}
    \toprule
        \textbf{Fairness criterion}&\textbf{Parity metric}&\textbf{Equation}\\
    \midrule
        Independence                 & Statistical parity   & $P[D =1|A = 0] = P[D = 1|A = 1]$                \\ \hline
        \multirow{2}{*}{Separation}  & TPR parity           & $P[D = 1|Y = 1, A = 0] = P[D = 1|Y = 1, A = 1]$ \\ 
                                     & FPR parity           & $P[D = 1|Y = 0, A = 0] = P[D = 1|Y = 0, A = 1]$ \\ \hline
        \multirow{2}{*}{Sufficiency} & PPV parity           & $P[Y = 1|D = 1, A = 0] = P[Y = 1|D = 1, A = 1]$ \\ 
                                     & FOR parity           & $P[Y = 1|D = 0, A = 0] = P[Y = 1|D = 0, A = 1]$ \\ 
    \bottomrule
\end{tabular}
\end{table*}

Another often discussed statistical concept in algorithmic fairness studies is calibration, which is defined as $P[Y=1|S=s] = s$, where $s$ denotes a real-valued score~\cite{Kleinberg2016,Pleiss2017,Chouldechova2017}.
An extended notion of calibration that also accounts for group membership is provided by~\cite{barocas-hardt-narayanan}.
They call it calibration by group and formally define it as $P[Y=1|S=s,A=0]=P[Y=1|S=s,A=1] = s$.
This notion of fairness is closely related to sufficiency, which is why some confusion regarding the differences between calibration and sufficiency (or one of its relaxations) emerged.
\cite{liu19implicit-criterion} and~\cite{barocas-hardt-narayanan} state that unconstrained learning satisfies group calibration and the fairness metric sufficiency.
In contrast,~\cite{Chouldechova2017} claims that it is possible that calibration is satisfied while PPV parity is not.
\cite{garg2020fairness} clarify this confusion by pointing out the difference between these two metrics: As calibration is defined for every score $s$ (which is assumed to be a continuous value and not a binary one), whereas PPV parity is just measured for a binary outcome, the two notions of fairness cannot be used interchangeably.
In particular, they show that for groups with different probability distributions, calibration does not necessarily imply sufficiency.
In this work, we investigate group fairness metrics regarding a protected attribute that divides individuals into groups with different probability distributions.

\subsection{Optimal Decisions and Fairness}
\label{ssec:Optimal-Decisions-and-Fairness}

Much of the extensive literature on algorithmic fairness is concerned with mitigating ML-based discrimination across protected groups.
According to~\cite{mitchell2021algorithmic}, a standard way of ensuring algorithmic fairness is to formulate it as a constrained optimization problem.
Thereby, a specific kind of utility function is maximized while also satisfying a fairness constraint~\cite{hardt2016equality,10.1145/3097983.3098095,lipton2018does,pmlr-v81-menon18a}.
This approach allows a utility-maximizing decision maker to derive optimal fair decision rules.
Absent any fairness constraint, applying a uniform threshold to all groups is optimal~\cite{Corbett-Davies2018}.
However, this does not automatically lead to fair decisions w.r.t. specific groups~\cite{barocas-hardt-narayanan}.
Due to the mathematical incompatibility of most group fairness metrics~\cite{Kleinberg2016,Friedler2016,Chouldechova2017,Pleiss2017}, the constrained optimization problem must be solved separately for any chosen definition of fairness.
This has been done for some group fairness metrics but not for others:
\cite{hardt2016equality} and~\cite{10.1145/3097983.3098095} have shown that optimal decision rules that satisfy (conditional) statistical parity, TPR parity, and FPR parity take the form of group-specific lower-bound thresholds.
Several other scholars have investigated thresholding solutions, such as~\cite{fish2016,valera2018enhancing,pmlr-v81-menon18a,lipton2018does}.
However, to our knowledge, a solution for the optimization problem satisfying PPV parity, FOR parity, or sufficiency does not yet exist.
This paper closes this research gap by providing a solution for deriving optimal decision rules that satisfy one of these three group fairness metrics.

In the computer science and in philosophical literature, sufficiency (or one of its relaxations, PPV parity and FOR parity) is often mentioned as one of the main fairness metrics~\cite{Chouldechova2017,narayanan2018translation,verma2018fairness,Kearns2019EthicalAlgorithm,barocas-hardt-narayanan,pessach2020algorithmic,caton2020fairness,Leben2020,berk2021criminal,Makhlouf2021Applicability,Kuppler2021,lee2021fair-selective-classification,baumann2022SDS_fairness_principle}.
Several algorithmic fairness papers have studied sufficiency or one of its relaxations.
\cite{Kasy2021} use an economic approach to argue that PPV parity is insufficient for fairness as it does not question existing differences between or within groups.
\cite{canetti2019soft} explore the possibilities of satisfying several fairness constraints at once, namely, parity of PPV, FOR, TPR, and FPR, but they do not provide a solution for PPV parity or FOR parity alone.
However, none of these authors derive optimal decision rules that satisfy (one of) these fairness constraints.
Such a solution is crucial to know what decision rational decision makers take if any of these group fairness metrics are enforced.

\section{optimal decisions under fairness constraints}
\label{sec:optimal-decisions-under-fairness-constraints}

This section provides a theoretical solution to maximizing the decision maker's utility while satisfying a group fairness definition (PPV parity, FOR parity, or sufficiency).
In the following, we first state the problem and introduce general notations before introducing an additional notion of fairness called \textit{within-group fairness}, which will prove to be helpful for the interpretation of the theoretical results.
Then, we formulate the optimization problem to be solved (with and without fairness constraints) in Subsection~\ref{ssec:OptimalityofDecisionRulesWithorWithoutFairnessConstraints}, before actually solving it for three specific group fairness definitions (see Subsections~\ref{ssec:ppv_parity} and~\ref{sec:sufficiency} and Appendix~\ref{appendix:for_parity}).

\subsection{Problem Statement and Notations}
\label{ssec:ProblemStatementandNotation}

Let us first introduce the specific context of our work, along with the main assumptions and some notations.
We assume a decision maker has to make a binary decision $D$ for each individual $i$, based on a feature vector $x_i \in \mathbb{R}^m$, which includes a protected attribute $a_i \in A$, denoting the group membership (sometimes also called \textit{sensitive attribute}).
Let $n_{A=a}$ be the number of individuals that are part of a group $a$.
Following related work, we restrict our analysis to a binary protected attribute $A$.
However, our analysis generalizes to all cases with a discrete protected attribute with more than two values.
An example may be the decision of a bank to grant a loan, based on $x_i$.%
\footnote{
In the loan granting scenario, $x_i$ might include an applicant's bill-paying history, unpaid debt, or past foreclosures.
}
We assume that the decisive feature for the decision is a binary target variable $Y$.
For a perfect predictor, every individual that belongs to the positive class ($Y=1$) must receive decision $D=1$, and vice versa~\cite{murphy2012machine,mitchell2021algorithmic}.
However, $Y$ is unknown at the time of decision making and is replaced by the probability $p_i=P[Y=1]$, which is given as a function of $x_i$, provided by a probabilistic prediction algorithm.
Generalizing the idea of a perfect predictor to probabilities means that individuals with a higher $p_i$ should be assigned $D=1$ and individuals with a lower probability of belonging to the positive class should receive the decision $D=0$.
The decision rule is thus a function $d$ that maps $p_i$ (and, possibly, $a_i$) to a binary decision.%
\footnote{
Notice that changing this decision rule represents a form of post-processing~\cite{mehrabi2019survey}.
There is no need to know the specific features used to train an algorithm because the learned model is treated as a black box.
}
Similar to~\cite{10.1145/3097983.3098095,canetti2019soft}, for our analysis, we assume furthermore that each group's probability distribution has strictly positive density.

In this paper, we formulate algorithmic fairness as a constrained optimization problem.
The goal of a rational decision maker is to maximize the expected utility while also satisfying some definition of group fairness.
In this section, we solve this constrained optimization problem for the group fairness definitions PPV parity, FOR parity, and sufficiency.

\subsection{Within-Group Fairness}

Before we derive the solution for a utility-maximizing decision maker that must satisfy some group fairness metric, let us formally define another notion of fairness -- which will be helpful for the interpretation of the theoretical results.
\begin{definition}
(Within-group fairness).
We say that a decision rule $d(p,a)$ satisfies \textit{within-group fairness} with respect to protected attribute $A$ if $\forall i \in S_{a|D=1} \forall j \in S_{a|D=0} (p_i > p_j)$, where $S_a$ is the set of all individuals of group $a \in A$.
\end{definition}
\noindent
Decision rules satisfying within-group fairness ensure that, within each group, a larger probability $p$ always leads to a higher chance of $D=1$.
More specifically, no individual that is assigned $D=1$ has a lower probability than any of the individuals that are given $D=0$.
In contrast, within-group unfairness results if there is at least one pair of individuals $(i,j)$ where $p_i$ is larger then $p_j$, and still $D_i=0$ and $D_j=1$. 
As more or less such cases can exist, there are different degrees to which within-group fairness can be violated.
We say that a decision rule $d(p,a)$ leads to an extreme form of within-group unfairness if $\forall i \in S_{a|D=1} \forall j \in S_{a|D=0} (p_i < p_j)$.
This is equivalent to applying an upper-bound threshold.

Within-group fairness requires that, within a group, individuals with a higher probability of belonging to the positive class ($p[Y=1]$) should have a higher chance of being assigned $D=1$ than individuals with a lower probability of belonging to the positive class.
For the loan example, it would be viewed as unfair if a loan is granted to one person but denied to another person with a higher probability of paying back the loan.
In many applications, such a perspective can be morally justified. 
Similarly, in the context of COMPAS, it is morally just to detain a defendant ($D=1$) with a very high risk of committing a violent crime if released ($Y=1$).
As we will see in more detail below, optimal decision rules satisfying PPV parity, FOR parity, or sufficiency do not always satisfy this notion of fairness.

\subsection{Optimal Decision Rules With or Without Fairness Constraints}
\label{ssec:OptimalityofDecisionRulesWithorWithoutFairnessConstraints}

For our theoretical analysis, we assume that a rational decision maker relies on a prediction model to cope with the uncertainty of the decision-relevant variable $Y$.
We assume that if $Y$ was known, the decision would be given.
More specifically, we assume that the decision maker's choice would be $D=1$ in the case of $Y=1$ and vice versa~\cite{murphy2012machine}.
However, in most real-world scenarios, a perfect predictor does not exist, which introduces uncertainty regarding the outcome of a decision.
There are four possible outcomes, all of which can be weighted according to the decision maker's desirability, representing a standard approach in the fair ML literature~\cite{mitchell2021algorithmic}.
This leads to the following expected individual utility%
\footnote{
For example, $u_{21}$ denotes the utility of making a decision $D=0$ and having outcome $Y=1$ occur, a so-called false negative (see Appendix~\ref{appendix:Utility-weighted-confusion-matrix}).
The definition of these utilities is context-specific.
In many cases, it would be straightforward for the decision maker to estimate them.
For example, a bank can easily calculate its utility in terms of monetary gains or losses for a successful loan (as opposed to a default) based on interest rates.
}:
\begin{equation}
u_i=\begin{cases}
u_{11} p_i + u_{12} (1-p_i), & \text{for $D=1$}\\
u_{21} p_i + u_{22} (1-p_i), & \text{for $D=0$}.
\end{cases}
\label{equation:individualutility}
\end{equation}
Defining $\tilde u_i$ as the expected {\em relative utility gain} when switching the decision from $D=0$ to $D=1$ gives $\tilde u_i =0$ for $D=0$, and  $\tilde u_i = \alpha p_i + \beta (1-p_i)$ for $D=1$, with the two parameters $\alpha= u_{11} - u_{21}$ and $\beta= u_{12} - u_{22}$.
It can be shown easily that maximizing $u_i$ is equivalent to maximizing $\tilde u_i$.
Moreover, the above made assumption that $Y=1$ implies $D=1$ requires that $\alpha>\beta$.

We assume that the decision maker takes not only one decision $d$, but many decisions $d_i$, over a population of individuals (e.g., when making loan decisions for many applicants).
In this case, the goal of a rational decision maker is to maximize the total expected utility $\tilde U$, which leads to the following optimization problem:
\begin{equation}
\argmax_d \;\;\;\; \tilde U=\sum_{i \in S} \tilde u_i d_i = \sum_{i \in S} \left( p_i (\alpha - \beta) + \beta \right) d_i ,
\label{equation:optimizationproblem}
\end{equation}%
where $S$ is the set of all individuals and $d_i$ is a binary multiplier representing the decision that is made for an individual $i$.
The optimum unconstrained decision rule $d^{\ast}$ is thus:
\begin{equation}
d^{\ast}_i=\begin{cases}
    1, & \text{for $p_i > \frac{- \beta}{\alpha - \beta} $} \\
    0, & \text{otherwise}
\end{cases}
\label{equation:optimal_unconstrained_d_rule}
\end{equation}
and takes the form of a single lower-bound threshold.
In the following, we interpret the decision problem as a \emph{selection problem}, denoting individuals with $D=1$ as ``being selected.''

The unconstrained solution does not ensure fairness w.r.t. the protected attribute at all and, in fact, is likely to produce unfairness (as measured with different group fairness metrics, see Section~\ref{ssec:Related-Work-Group-Fairness-Metrics}).
Decision makers who want to maximize their utility while taking fair decisions must solve the following constrained optimization problem:
\begin{equation}
\argmax_d \;\;\;\; \tilde U \;\;\; \text{ subject to } \; \textit{some fairness constraint}.
\label{equation:optimizationproblem-constrained}
\end{equation}%
As we outlined in Section~\ref{ssec:Optimal-Decisions-and-Fairness}, this constrained optimization problem has been solved for some group fairness metrics%
\footnote{
The authors of~\cite{hardt2016equality} use a function they call \textit{immediate utility} and~\cite{10.1145/3097983.3098095} rely on \textit{loss minimization}.
Both approaches can easily be formulated in terms of what we call decision maker utility, which is why the solutions of~\cite{hardt2016equality} and~\cite{10.1145/3097983.3098095} also hold in this setting.
}
(statistical parity, conditional statistical parity, TPR parity, and FPR parity) but not for others, such as PPV parity, FOR parity, or sufficiency.
In the remainder of this chapter, we solve the constrained optimization problem stated in Equation~\ref{equation:optimizationproblem-constrained} (using the three mentioned group fairness metrics as fairness constraints) for two different cases:
\begin{align*}
&\text{case I) } \text{the number of individuals to be selected (}n_{D=1} \text{) is}\\ 
&\quad \text{predefined,}\\
&\text{case II) } \text{the number of individuals to be selected (}n_{D=1} \text{) is} \\ 
&\quad \text{not predefined.}\\
\end{align*}

\subsection{Optimal Decision Rules under PPV Parity}
\label{ssec:ppv_parity}

We now present the optimal solution for the optimization problem stated in Equation~\ref{equation:optimizationproblem-constrained} constrained by the group fairness metric positive predictive value (PPV) parity for both cases I and II.
The PPV is defined as the average probability of individuals with $D=1$ to have $Y=1$, which can be written as $\frac{1}{n_{D=1}} \sum\limits_{i \in S} p_i d_i$.
The fairness definition PPV parity requires this value to be the same across groups.
Thus, the constrained optimization problem has the form:
\begin{equation}
\begin{split}
&\argmax_d \;\;\;\; \tilde U=\sum_{i \in S} \left( p_i (\alpha - \beta) + \beta \right) d_i \\
&\text{subject to} \;\;\, \frac{1}{n_{A=0|D=1}} \sum_{j \in S_0} p_j d_j = \frac{1}{n_{A=1|D=1}} \sum_{j \in S_1} p_j d_j = PPV, \\
& \;\;\;\;\;\;\;\;\;\;\;\;\;\;\;\;\; \text{for } PPV \in [0,1] ,
\end{split}
\label{equation:optimizationproblem_equivalent_with_PPV}
\end{equation}%
where $S_a$ is the set of all individuals of group $a$ and $n_{A=a|D=1}$ denotes the number of individuals in group $a$ with $D=1$.
Each decision rule results in a specific selection of individuals, which also yields a specific selection for each group $S_a$.
Since the PPV can only be defined if at least one individual is selected, we assume $n_{A=a|D=1}\geq1$ for each group.

We derive the solution to this optimization problem in two consecutive steps.
\begin{itemize}
    \item First, we derive the optimal decision rules $d^{\ast}$ for a simplified constraint: We assume that the PPV of both groups must be equal to a predefined value $PPV_t \in [0, 1]$.
    \item Second, we solve the full optimization problem by maximizing the decision maker's utility over all possible values of $PPV_t$.
\end{itemize}

We now derive the solution for the first step, thus specifying a value $PPV_t \in [0,1]$ for the constraint.
We do this under the assumption of a positive probability density of individuals over the full range $[0,1]$ for both groups, and in the limit case of very large populations ($n_{A=a} \rightarrow \infty$).
Thus, for each $PPV_t$, there exist individuals in each group with $p=PPV_t$.%
\footnote{
This technical assumption simplifies the notation.
For finite group sizes, the equality constraint in Equation~\ref{equation:optimizationproblem_equivalent_with_PPV} may not be met precisely for many values of $PPV_t$, and the fairness constraint might only be fulfilled approximately. Thus, the equality requirement of the FC has to be softened into approximate equality. However, the proofs are also valid for an approximate version of equality.
}
The most straightforward selection fulfilling the fairness constraint thus consists of selecting one of these individuals in each group.
Obviously, other selections exist, for example selecting more than one individual with $p=PPV_t$, or selecting individuals in an interval $[PPV_t-\epsilon,PPV_t+\epsilon]$ such that the average $p$ of the selection equals $PPV_t$.
However, many other selection rules are conceivable, with different numbers of selected individuals. 

For a predefined number of selected individuals $n_{D=1}$ (i.e., case I), the following Lemma holds:
\begin{lemma}
\label{lemma:case_I_PPV}
For a given value of $PPV_t$ and a predefined number of selected individuals $n_{D=1}$, any selection fulfilling the fairness constraint of Equation~\ref{equation:optimizationproblem_equivalent_with_PPV} leads to a total utility $\tilde U$ of:
\begin{equation}
\tilde U = (\alpha PPV_t + \beta (1-PPV_t) ) n_{D=1}.
\label{equation:rearrangedpredefined}
\end{equation}
\end{lemma}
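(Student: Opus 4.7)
The plan is to observe that the claim is actually a direct algebraic consequence of the PPV parity constraint, once we rewrite the objective in a way that exposes the group-wise sums $\sum_{j\in S_a} p_j d_j$. So the proof should be essentially a one-line calculation; there is no real combinatorial or optimization content in this lemma, which is merely a convenient identity that will later let us replace the utility by a function of $PPV_t$ and $n_{D=1}$ alone.

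Concretely, I would start from the objective in Equation~\ref{equation:optimizationproblem_equivalent_with_PPV} and split it as
\begin{equation*}
\tilde U \;=\; (\alpha-\beta)\sum_{i \in S} p_i d_i \;+\; \beta \sum_{i \in S} d_i.
\end{equation*}
The second sum is simply $n_{D=1}$ by the definition of $d_i$ as the binary selection indicator. For the first sum, I would partition $S$ into the two groups $S_0$ and $S_1$, writing
\begin{equation*}
\sum_{i \in S} p_i d_i \;=\; \sum_{j \in S_0} p_j d_j + \sum_{j \in S_1} p_j d_j.
\end{equation*}

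Next, I would invoke the PPV parity constraint with target value $PPV_t$: by definition of the PPV of group $a$, $\sum_{j \in S_a} p_j d_j = PPV_t \cdot n_{A=a|D=1}$ for each $a \in \{0,1\}$. Adding the two group-wise identities gives $\sum_{i \in S} p_i d_i = PPV_t \cdot (n_{A=0|D=1} + n_{A=1|D=1}) = PPV_t \cdot n_{D=1}$. Substituting back yields
\begin{equation*}
\tilde U \;=\; (\alpha-\beta)\, PPV_t \, n_{D=1} + \beta\, n_{D=1} \;=\; \bigl(\alpha\, PPV_t + \beta(1-PPV_t)\bigr)\, n_{D=1},
\end{equation*}
which is exactly Equation~\ref{equation:rearrangedpredefined}.

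The only thing to check carefully is that the lemma's ``any selection fulfilling the fairness constraint'' covers all candidate selections, i.e. that the identity is independent of how the $n_{D=1}$ selections are distributed between the two groups and where in $[0,1]$ the selected probabilities sit; this is automatic from the above because both intermediate quantities $\sum_i d_i$ and $\sum_i p_i d_i$ are determined solely by $n_{D=1}$ and $PPV_t$ under the constraints. There is no real obstacle here: the main value of the lemma is that it reduces the later maximization (step two in the outlined two-step solution for PPV parity) to the one-dimensional problem of choosing $PPV_t$ and $n_{D=1}$, which is where the genuine work of Section~\ref{ssec:ppv_parity} begins.
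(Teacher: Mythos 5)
Your proof is correct and follows essentially the same route as the paper's: decompose $\tilde U$ into $\beta\sum_i d_i + (\alpha-\beta)\sum_i p_i d_i$ and use the PPV constraint to replace $\sum_i p_i d_i$ by $PPV_t\, n_{D=1}$. The only difference is that you make explicit the step the paper leaves implicit --- that equal group-wise PPVs of value $PPV_t$ force the overall PPV to equal $PPV_t$ --- which is a harmless (indeed slightly more careful) elaboration of the same calculation.
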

Lemma~\ref{lemma:case_I_PPV} (proof in Appendix~\ref{appendix:proof-lemma-case_I_PPV}) shows that the fairness constraint already defines the total utility, if $n_{D=1}$ is given.
In other words: any decision rule $d(p,a)$ with $n_{D=1}$ that satisfies the constraint stated in Equation~\ref{equation:optimizationproblem_equivalent_with_PPV} for a given $PPV_t$ is optimal.
We thus end up with two independent selection problems, one for each group, which consists of finding a selection of individuals characterized by the fact that their average probability equals $PPV_t$.
For each group $a$, selections with different numbers $n_{A=a|D=1}$ are possible.
As long as the predefined $n_{D=1}$ is met, the group membership of the selected individuals does not matter for the resulting total utility.
Hence, there may be several solutions to the optimization problem that differ regarding the number of individuals selected per group (i.e., representing different combinations of ($n_{A=0|D=1},n_{A=1|D=1}$)), with $n_{A=0|D=1}+n_{A=1|D=1} = n_{D=1}$.
Note that most of these solutions violate the group fairness metric statistical parity while still meeting the fairness criterion of PPV parity.

We now analyze case II, where $n_{D=1}$ is not predefined.
Lemma~\ref{lemma:case_I_PPV} also leads to another important result: For values $PPV_t$ for which $\alpha PPV_t + \beta (1-PPV_t)<0$, a decision maker who wants to maximize the total utility should minimize $n_{D=1}$, thus selecting only one individual from each group, yielding a total utility of $\tilde U = 2 (\alpha PPV_t + \beta (1-PPV_t) )$ for a binary protected attribute.
In the following, we thus assume that $\alpha PPV_t + \beta (1-PPV_t)>0$.
Again we assume that the size of both groups is large but finite.
Lemma~\ref{lemma:case_I_PPV} shows that, under these assumptions, the decision maker's goal is to find the selection that satisfies the constraint $PPV=PPV_t$ with the maximum $n_{D=1}$.
Theorem~\ref{theorem:mainresult_optimalsolutionwiththresholdrule} specifies the solution of this optimization problem (which can be solved independently for each group):
\begin{theorem}
\label{theorem:mainresult_optimalsolutionwiththresholdrule}
For any given $PPV_t$, the optimal fair decision rules $d^{\ast}$ (i.e., decision rules that maximize $\tilde U$ while satisfying $PPV=PPV_t$) take the following form:
\protect\begin{equation}
d^{\ast}_i=\protect\begin{cases}
    \protect\begin{rcases}
    1, & \text{for $p_i \geq \tau_a$} \\
    0, & \text{otherwise}
    \protect\end{rcases}
    \text{for $PPV_t > BR_{A=a}$}\\
    \protect\begin{rcases}
    1, & \text{for $p_i \leq \tau_a$} \\
    0, & \text{otherwise}
    \protect\end{rcases}
    \text{for $PPV_t< BR_{A=a}$},
\protect\end{cases}
\label{equation:d_rule_notpredefined_PPV}
\protect\end{equation}
where $\tau_a$ denote different group-specific constants and $BR_{A=a}$ denotes group $a$'s base rate (BR) which is defined as the ratio of individuals belonging to the positive class ($Y=1$) in a group: $BR_{A=a} = P[Y=1|A=a] = \frac{1}{n_{A=a}} \sum\limits_{i \in S_a} p_i$.
\end{theorem}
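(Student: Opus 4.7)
My plan starts from Lemma~\ref{lemma:case_I_PPV}, which says that once the fairness constraint $PPV=PPV_t$ holds, the total utility depends only on $n_{D=1}$ and equals $(\alpha PPV_t + \beta(1-PPV_t))\, n_{D=1}$. Because the theorem is stated in the regime $\alpha PPV_t + \beta(1-PPV_t) > 0$, maximizing $\tilde U$ over constraint-satisfying decision rules reduces to maximizing $n_{D=1}$. Since the PPV constraint is imposed independently on each group, the problem separates into two single-group problems: for each group $a$, maximize $|T|$ over subsets $T \subseteq S_a$ subject to $\tfrac{1}{|T|}\sum_{i\in T} p_i = PPV_t$.

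For each per-group subproblem I would use a rearrangement/greedy argument. Consider first the case $PPV_t > BR_a$. Rewriting the constraint as $\sum_{i\in T} p_i = |T|\cdot PPV_t$, note that for any fixed cardinality $n=|T|$ the selection with the largest possible sum $\sum_{i\in T} p_i$ is the \emph{top-}$n$ selection, i.e., the $n$ individuals of $S_a$ with the largest $p_i$. Hence a size-$n$ constraint-satisfying selection can exist only if the average of the top-$n$ is at least $PPV_t$. As $n$ grows from $1$ to $|S_a|$, the top-$n$ average is monotonically non-increasing and converges to $BR_a < PPV_t$, so there is a largest $n^{\ast}$ for which the top-$n^{\ast}$ average meets $PPV_t$; by the positive-density assumption in the large-population limit (and the approximate-equality caveat in the paper's footnote for finite populations), equality can be realized. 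This selection has the form $\{i : p_i \geq \tau_a\}$, a lower-bound threshold. Moreover, any selection with $|T| > n^{\ast}$ has an average no greater than the top-$|T|$ average, which is strictly below $PPV_t$, contradicting the constraint; so the threshold rule is optimal.

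The case $PPV_t < BR_a$ is symmetric: one replaces top-$n$ by \emph{bottom-}$n$, which gives the smallest achievable sum and hence a \emph{lower} bound on the achievable average. The same monotonicity argument identifies a largest $n^{\ast}$ whose bottom-$n^{\ast}$ average reaches $PPV_t$ from below, and the corresponding optimum is a set $\{i : p_i \leq \tau_a\}$, i.e., an upper-bound threshold. In both cases the threshold value $\tau_a$ is determined implicitly by $PPV_t$ and the group-specific distribution of $p$.

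The main obstacle is the maximality step: showing that no non-threshold selection of larger size can satisfy the constraint. This is exactly what the rearrangement extremality of the top-$n$ and bottom-$n$ sums delivers, and it also explains the dichotomy in the threshold direction as a consequence of the sign of $PPV_t - BR_a$. The only other technical point is the equality-vs-approximation issue for finite populations, which is handled by the density assumption stated just before the theorem and by the softening of equality into approximate equality described in the paper's footnote.
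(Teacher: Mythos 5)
Your proposal is correct and follows essentially the same route as the paper: the reduction via Lemma~\ref{lemma:case_I_PPV} to maximizing $n_{D=1}$ in the regime $\alpha PPV_t + \beta(1-PPV_t)>0$, the per-group separation, and the monotonicity of the top-$n$ (resp.\ bottom-$n$) average are exactly the paper's functions $g_2(n)$ (resp.\ $g_1(n)$) and the same extremality/contradiction argument for maximality. No substantive differences to note.
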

\begin{proof}
We begin with the case $PPV_t<BR_{A=a}$. We define a group-specific function $g_1(n_{A=a|D=1})$, defined as the minimum value of PPV among all decision rules $\vec d$ with a specified $n_{A=a|D=1}$, i.e., $g_1(n_{A=a|D=1})= \min\limits_{\vec d} \frac{1}{n_{A=a|D=1}} \sum p_i d_i$. Obviously, $g_1(n_{A=a|D=1})$ is given by selecting the $n_{A=a|D=1}$ individuals with the smallest values of $p$. The function $g_1(n)$ for $n=1,...,n_{A=a}$ is monotonously increasing, with $g_1(1)=0$
\footnote{Recall that we consider a limit of very large populations, so the individual with the lowest $p_i$ is characterized by $p_i=0$. For $n=1$, the minimum PPV value is achieved by selecting just this individual.}
and $g_1(n_{A=a})=BR_{A=a}$. 
It is now easy to see that solving the equation $g_1(n)=PPV_t$ w.r.t. $n$ yields the maximum possible value $n$ that meets the PPV condition: Assume that there was a value $m>n$ for which a decision rule exists such that $PPV=PPV_t$. As $g_1$ is monotonically increasing, this implies $m \leq n$, which is a contradiction. Thus, for the case $PPV_t<BR_{A=a}$, the maximum achievable $n_{A=a|D=1}$ with $\frac{1}{n_{A=a|D=1}} \sum p_i d_i = PPV_t$ in the space of all possible decision rules is achieved by selecting all individuals with $p_i\leq\tau_a$. The corresponding upper-bound threshold $\tau_a$ is given by the unique solution of $g_1(n)=PPV_t$.

For $PPV_t>BR_{A=a}$, an analogous argumentation holds by introducing a function $g_2(n_{A=a|D=1})= \max\limits_{\vec d} \frac{1}{n_{A=a|D=1}} \sum p_i d_i$. This is a monotonically decreasing function with $g_2(1)=1$ and $g_2(n_{A=a})=BR_{A=a}$. The unique solution of $g_2(n)=PPV_t$ yields the lower-bound threshold $\tau_a$ that meets the PPV condition.
\end{proof}

Finally, we perform the second step of the solution: from a discretization of all $PPV$, for which a solution exists, we choose the one that (in combination with the corresponding $n_{D=1}$) maximizes the total utility.
Thereby, every $n_{D=1}$ is composed of the optimal selections $n_{A=a|D=1}$ for all groups $a \in A$, as elaborated in the first step of the solution.

We provide an analogous solution for the optimal decision rules satisfying FOR parity in Appendix~\ref{appendix:for_parity}.

\subsection{Optimal Decision Rules under Sufficiency}
\label{sec:sufficiency}

Based on the solutions presented above, we now describe the decision rules that maximize the decision maker's utility while satisfying sufficiency (requiring PPV parity and FOR parity).
This gives the constrained optimization problem:
\begin{equation}
\begin{split}
&\argmax_d \;\;\;\; \tilde U = \sum_{i \in S} \tilde u_i \\
&\text{subject to} \;\;\;\; \;\;\;\; \;\;\;   \frac{1}{n_{A=0|D=1}} \sum_{j \in S_0} p_j d_j = \frac{1}{n_{A=1|D=1}} \sum_{j \in S_1} p_j d_j \\
& \;\;\;\; \;\;\;\; \;\;\;\; \;\;\;\; \,  \frac{1}{n_{A=0|D=0}} \sum_{j \in S_0} p_j (1-d_j) = \frac{1}{n_{A=1|D=0}} \sum_{j \in S_1} p_j (1-d_j),
\end{split}
\label{equation:optimizationproblem_sufficiency}
\end{equation}%
where the first constraint represents PPV parity and the second constraint ensures FOR parity.
Similar to our PPV parity solution, we also proceed in two steps for optimal decision rules satisfying sufficiency.
First, we derive the optimal decision rules for a given value of $PPV=PPV_t$.
Second, we solve the optimization problem by choosing a PPV-FOR combination that maximizes the decision maker's utility.

We start with an optimal decision rule satisfying PPV parity (see Equation~\ref{equation:d_rule_notpredefined_PPV}) and then add the second constraint (requiring FOR parity).
Recall that a decision rule splits this group into those selected ($D=1$) and those not selected ($D=0$).
Thus, we can write:
\begin{equation}
\sum_{i \in S_a} p_i = \left( \sum_{i \in S_a} p_i (1-d_i) \right) + \left( \sum_{i \in S_a} p_i d_i \right).
\label{equation:BR-is-PPF-and-FOR}
\end{equation}
As we specified $PPV_{A=a}=PPV_t$, PPV parity is satisfied.
Thus, this gives:
\begin{equation}
n_{A=a} BR_{A=a} = n_{A=a|D=0} FOR_{A=a} + n_{A=a|D=1} PPV_t.
\label{equation:PPV-parity-BR-equation}
\end{equation}
With $n_{A=a|D=0} = n_{A=a} - n_{A=a|D=1}$ and some reformulation, we get:
\begin{equation}
FOR_{A=a} = \frac{n_{A=a} BR_{A=a} - n_{A=a|D=1} PPV_t}{n_{A=a} - n_{A=a|D=1}}.
\label{equation:sufficiency}
\end{equation}
Thus, for a given $PPV_t$, the corresponding group-specific $FOR_{A=a}$ just depends on $n_{A=a|D=1}$, because $n_{A=a}$ and $BR_{A=a}$ are given by the group $a$'s population.
For groups with different probability distributions, $FOR_{A=0}$ and $FOR_{A=1}$ are usually different if just PPV parity is enforced.
Hence, to satisfy sufficiency, at least one of the two groups must deviate from their optimal solution (under PPV parity) to ensure that the FORs of the two groups are equal.
Most importantly, this deviation must not change the group's PPVs so that the PPV parity constraint still holds (with $PPV=PPV_t$).
Let the \textit{solution space} consist of all combinations of $PPV$ and $FOR$ that can be achieved by all groups, based on the groups' probability distributions.
We now show how this solution space can be constructed for one or for more groups.

As shown in Equation~\ref{equation:PPV-parity-BR-equation}, the $PPV$ and the $FOR$ always lie on different sides of the BR, because $n_{A=a} = n_{A=a|D=0} + n_{A=a|D=1}$ and $BR_{A=a}, FOR_{A=a}, PPV_t \in [0,1]$.
Therefore, if $PPV > BR_{A=a}$, the group's $FOR_{A=a}$ must take a value \textit{below} $BR_{A=a}$ and vice versa.
Let $F_a(PPV_{A=a})$ be a group-specific function defined as a group $a$'s $FOR_{A=a}$ that results from maximizing $n_{A=a|D=1}$ for a specific value of $PPV$.
As shown in the proof of Theorem~\ref{theorem:mainresult_optimalsolutionwiththresholdrule}, varying the number of selected individuals without changing the group's PPV lets us specify the range of values a group's FOR can take.
In this way, we can derive the range of values the $FOR$ can take for any $PPV$, which will then allow us to construct the solution space.

In Figure~\ref{fig:theoretical_sufficiency_1group}, the solution space is represented as a white area and the function $F_a(PPV_{A=a})$ is illustrated with a blue line.
For example, for a given $PPV^{\prime}$, point A is achieved by selecting just one individuals with a probability $p_i = PPV$, point B is achieved by maximizing $n_{A=a|D=1}$.
The green line in Figure~\ref{fig:theoretical_sufficiency_1group} visualizes the combinations resulting from applying optimal decision rules for a specific $PPV$:
As we stated in Theorem~\ref{theorem:mainresult_optimalsolutionwiththresholdrule}, it is optimal to apply a lower-bound threshold and if $PPV \in [BR_{A=a},1]$ and an upper-bound threshold is optimal if $PPV \in [PPV_0, BR_{A=a}]$, where $PPV_0$ denotes the $PPV$ for which $\alpha PPV + \beta (1-PPV) = 0$.
If $PPV<PPV_0$, it is optimal to minimize the number of selected individuals (see Section~\ref{ssec:ppv_parity}).
The intuition to construct a solution that satisfies sufficiency is the following: under PPV parity, for a given $PPV^{\prime}$, it is optimal to apply a decision rule leading to a PPV-FOR combination lying at point B.
However, the FOR that this decision rule yields might not lie within the other group's solution space, making a deviation in point A necessary.
\begin{figure*}[t]
\centering
\begin{subfigure}[t]{0.507\textwidth}
    \centering
    \includegraphics[width=\textwidth]{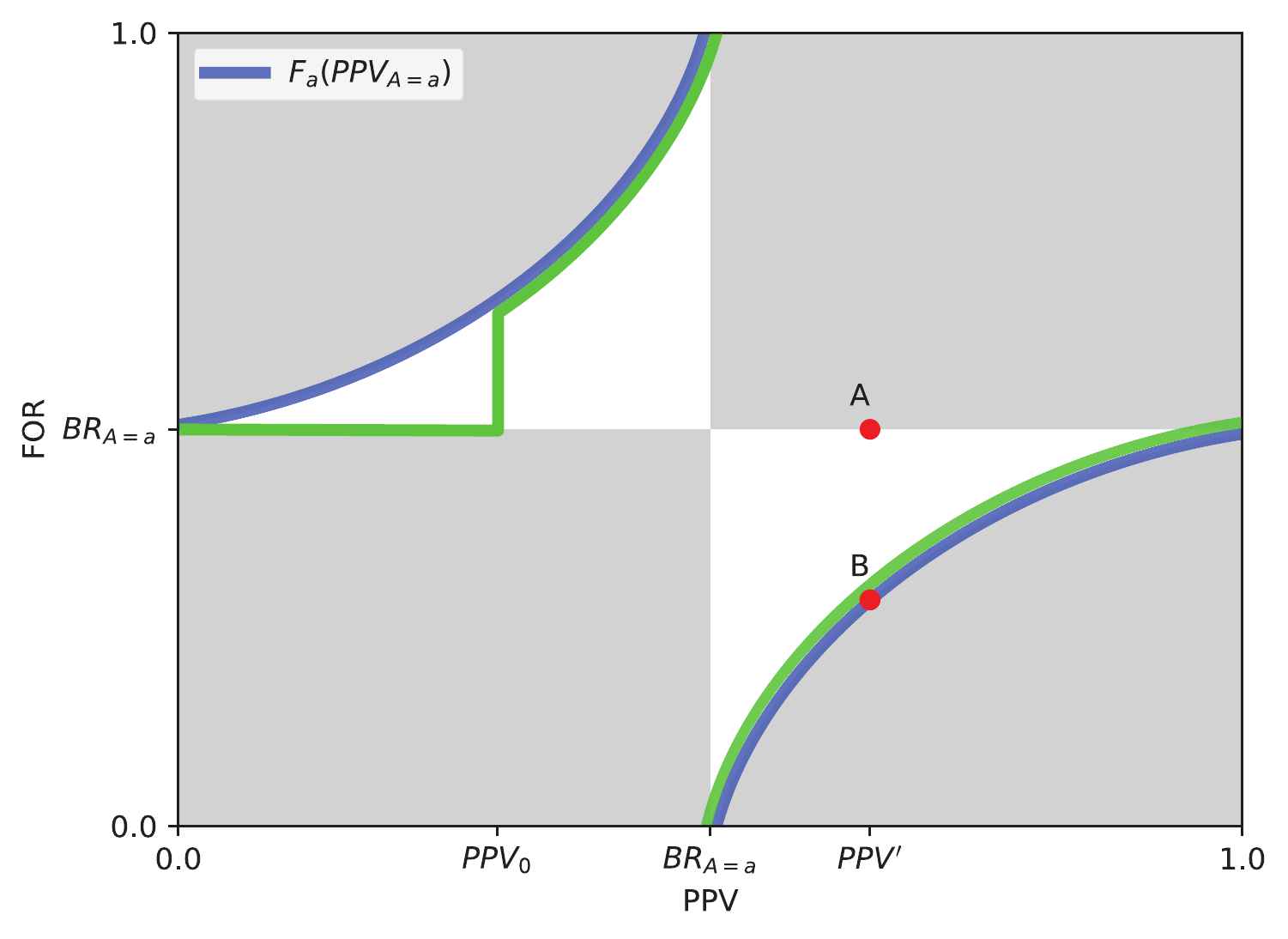}
    \caption{Solutions space: possible (white area) and optimal (green line) \\PPV-FOR combinations for one group}
    \label{fig:theoretical_sufficiency_1group}
\end{subfigure}
\hfill
\begin{subfigure}[t]{0.487\textwidth}
    \centering
    \includegraphics[width=\textwidth]{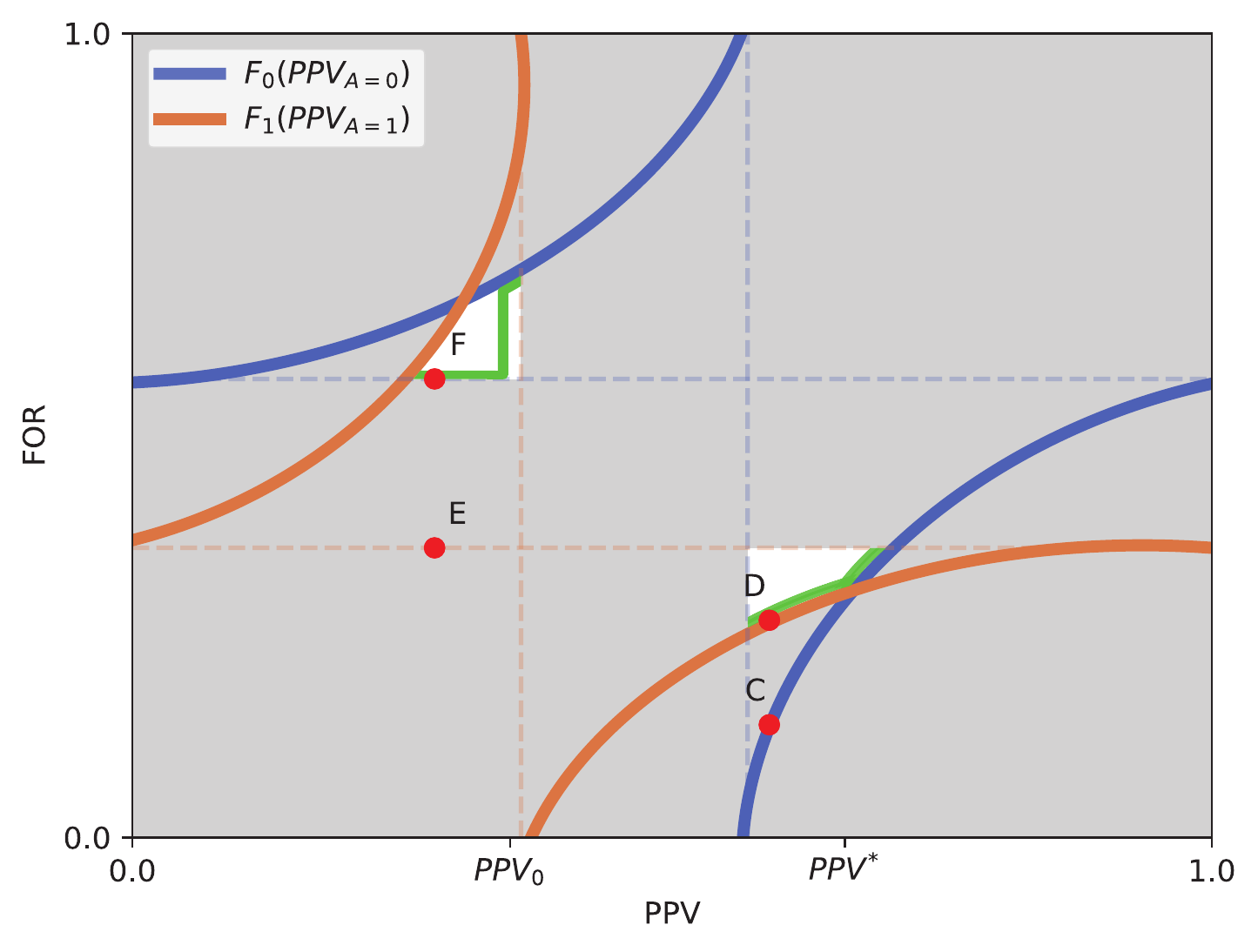}
    \caption{Overlying solutions spaces: possible (white area) and optimal \\(green line) PPV-FOR combinations for two groups}
    \label{fig:theoretical_sufficiency_2groups}
\end{subfigure}
\caption{PPV-FOR plot: utility-maximizing PPV-FOR combinations for specific values of $PPV$}
\label{fig:theoretical_sufficiency}
\end{figure*}

Let us now generalize this to two (or more) groups.
To construct the joint solution space of several groups, the individual solution spaces can be laid on top of each other.
Figure~\ref{fig:theoretical_sufficiency_2groups} illustrates this for two groups, 0 (blue) and 1 (orange).
The two white areas include all PPV-FOR combinations that are feasible for both groups.
Inside this resulting smaller solution space, the optimal $FOR$ for each possible $PPV$ can be found (as visualized with the green line in Figure~\ref{fig:theoretical_sufficiency_2groups}), which satisfies sufficiency.
Enforcing PPV parity does not result in a solution that also satisfies FOR parity simply by chance, apart from one exceptional case: That is, only if $PPV = PPV^{*}$, where $PPV^{*}$ denotes the specific $PPV$ for which the two groups' lines representing their optimal decision satisfying PPV parity (i.e., $F_0(PPV_{A=0})$ and $F_1(PPV_{A=1})$) intersect, the decision rule satisfying PPV parity also satisfies sufficiency.
If $PPV \neq PPV^{*}$, one of the two groups must deviate from their optimal PPV-FOR combination in order to match the other group's $FOR$ and to ensure that not only PPV parity but also FOR parity is satisfied.
Visually, this deviation (representing a change in the FOR for a remaining value of PPV) can be perceived as a vertical move away from the optimal PPV-FOR combination (satisfying PPV parity) towards the edge of the solution space (see $C \rightarrow D$ or $E \rightarrow F$ in Figure~\ref{fig:theoretical_sufficiency_2groups}).

The construction of the solution space (as visualized in Figure~\ref{fig:theoretical_sufficiency_2groups}) directly generalizes to cases with any number of groups, i.e., cases in which the sensitive attribute is a set consisting of more than two different values.
Theorem~\ref{theorem:mainresult_sufficiency} shows that this makes a full satisfaction of within-group fairness impossible.
\begin{theorem}
\label{theorem:mainresult_sufficiency}
Optimal decision rules $d^{\ast}$ that satisfy sufficiency lead to within-group unfairness in all but one of the groups if a solution exists.
\end{theorem}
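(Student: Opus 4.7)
My plan is to reduce within-group fairness of a decision rule to a structural condition on its restriction to each group, and then argue that this condition together with the sufficiency constraint at the utility-optimal operating point can be met for at most one group.

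First I would observe that the definition of within-group fairness for group $a$ forces the restriction of the decision rule to $S_a$ to be a lower-bound threshold rule: the requirement that every selected individual of group $a$ has strictly larger $p$ than every non-selected individual of the same group is exactly the statement that the selected set is $\{i \in S_a : p_i \geq \tau_a\}$ for some $\tau_a$.

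Next, fix the sufficiency target $(PPV_{\text{opt}}, FOR_{\text{opt}})$ realised by an optimum $d^{*}$. PPV parity forces $PPV_{A=a}=PPV_{\text{opt}}$ in every group $a$. Arguing as in Theorem~\ref{theorem:mainresult_optimalsolutionwiththresholdrule}, in the large-population regime with strictly positive density a lower-bound threshold on group $a$ with $PPV_{A=a}=PPV_{\text{opt}}$ exists only when $PPV_{\text{opt}} \geq BR_{A=a}$, and when it exists it is uniquely determined and pins down $n_{A=a|D=1}$. Equation~\ref{equation:sufficiency} then pins down $FOR_{A=a}$ as a specific value $F_{a}^{\text{low}}(PPV_{\text{opt}})$. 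Consequently, if group $a$ is within-group fair under $d^{*}$, the point $(PPV_{\text{opt}}, FOR_{\text{opt}})$ must lie on the one-dimensional graph of $F_{a}^{\text{low}}$. If instead $PPV_{\text{opt}} < BR_{A=a}$, no lower-bound threshold achieves $PPV_{A=a}=PPV_{\text{opt}}$ at all, so group $a$ is automatically within-group unfair under $d^{*}$. For two distinct groups $a$ and $b$ to be simultaneously within-group fair, the same point would therefore have to satisfy $F_{a}^{\text{low}}(PPV_{\text{opt}}) = F_{b}^{\text{low}}(PPV_{\text{opt}})$; since the groups have different probability distributions, $F_{a}^{\text{low}}$ and $F_{b}^{\text{low}}$ are distinct continuous functions and coincide only on an isolated set of $PPV$ values, while the utility-maximising $PPV_{\text{opt}}$ chosen in the second step of the sufficiency construction lies generically outside this coincidence set. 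Hence at most one group's within-group-fair curve can pass through $(PPV_{\text{opt}}, FOR_{\text{opt}})$; every other group must realise this operating point via a rule whose selected set is not an upper tail of $\{p_i\}$, which by the first observation violates within-group fairness.

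The main obstacle is the degenerate configuration in which two or more $F_{a}^{\text{low}}$ curves happen to meet exactly at the utility-optimal point, or in which several groups share an identical conditional distribution. I would dispose of it by making explicit the paper's implicit non-degeneracy assumption (pairwise distinct group distributions with utility parameters placing the optimum off the measure-zero coincidence set), which is consistent with the motivating premise that fairness is an issue precisely because the groups' distributions differ; under this assumption the uniqueness argument above yields the claim.
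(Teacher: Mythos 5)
Your argument is correct and rests on the same mechanism as the paper's proof---namely that, once $PPV_{A=a}$ is fixed, a threshold rule pins down $n_{A=a|D=1}$ and hence, via Equation~\ref{equation:sufficiency}, pins down $FOR_{A=a}$ to a group-specific value, and these values differ across groups with different distributions---but you run it in the contrapositive direction, whereas the paper argues constructively. The paper starts from the PPV-parity-optimal thresholds, observes that the resulting FORs disagree, and concludes that one group must ``deviate'' by adjusting $n_{A=a|D=1}$ at fixed PPV, asserting that this requires a ``more complex'' (non-threshold) rule and hence within-group unfairness; your version makes that last step airtight by noting that the lower-bound threshold achieving a given $PPV_{\text{opt}}\geq BR_{A=a}$ is \emph{unique} (by monotonicity of $g_2$ in the proof of Theorem~\ref{theorem:mainresult_optimalsolutionwiththresholdrule}), so any group whose realised $FOR$ differs from $F_a(PPV_{\text{opt}})$, or for which $PPV_{\text{opt}}<BR_{A=a}$, simply cannot be within-group fair. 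You are also more honest than the paper's proof about the degenerate case: the paper acknowledges in the surrounding text that at $PPV=PPV^{*}$ (where the groups' $F_a$ curves intersect) PPV parity alone yields sufficiency with threshold rules for \emph{both} groups, yet the theorem statement and its proof do not carve this out; your explicit non-degeneracy assumption (that the utility-maximising $PPV_{\text{opt}}$ does not land on the coincidence set) is asserted rather than proved, but it is exactly the caveat the paper itself needs and leaves implicit. In short: same key lemma, different packaging; yours buys a cleaner justification of the ``deviation implies non-threshold'' step at the cost of an explicit genericity hypothesis.
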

\begin{proof}
Let us first consider a binary protected attribute $A$.
The intersection of the group-specific solution spaces defines all PPV-FOR combinations for which a solution exists.
If the deviating group's $FOR_{A=a}>BR_{A=a}>PPV$, their $FOR_{A=a}$ must match the other group's BR ($E \rightarrow F$ in Figure~\ref{fig:theoretical_sufficiency_2groups}).
Otherwise, if $FOR_{A=a}<BR_{A=a}<PPV$, their $FOR$ must match the other group's $F_a(PPV_{A=a})$ ($C \rightarrow D$ in Figure~\ref{fig:theoretical_sufficiency_2groups}).
This deviation is necessary to satisfy sufficiency and can be achieved by adjusting $n_{A=a|D=1}$.
This represents an equivalent problem as maximizing the utility under PPV parity with case I -- as we discussed it in Section~\ref{ssec:ppv_parity}.
Hence, the deviating group's optimal decision rule can take many forms -- e.g., one could apply a stochastic decision rule that flips a coin to set $D=1$ with probability $q$ for all individuals with $p>\tau_a$, where $\tau_a$ is a group-specific constant.
However, instead of a simple lower- or upper-bound threshold but, are more complex decision rule is required in order to ensure that the correct number of individuals are selected.
Thus, this always leads to unfairness \textit{within} this group to achieve sufficiency \textit{between} the groups: $\exists i,j \in S_a (p_i > p_j \land d_i=0 \land d_j=1)$.
\end{proof}
Notice that any PPV-FOR combination lying inside the solution space but not at the edge is Pareto dominated because there is another point with the same $PPV$ that results in a higher utility.%
\footnote{If $FOR_{A=a}<BR_{A=a}<PPV$, this point lies on one of the groups' $F_a(PPV_{A=a})$, else, this point is situated on one of the groups' BR.
}
The green line in Figure~\ref{fig:theoretical_sufficiency_2groups} represents the optimal PPV-FOR combinations for specific values of $PPV$.
Any number of solution spaces can be laid on top of each other, which is why this finding extends directly to non-binary sensitive attributes.
Though, the more groups are considered (assuming that the groups' $F_a$ functions and their BRs differ), the smaller the solution space becomes.
And, the smaller the solution space, the bigger the required deviation, which produces more within-group unfairness.
An area of size 0 is possible and would imply that sufficiency cannot be satisfied.

Finally, as we can compute the utility resulting from applying an optimal decision rule satisfying sufficiency for any value of $PPV$, we can solve the constrained maximization problem stated in Equation~\ref{equation:optimizationproblem_sufficiency} by choosing the optimal PPV-FOR combination (i.e., the optimal point lying on the green line in Figure~\ref{fig:theoretical_sufficiency_2groups}).

\section{Illustrative Examples}
\label{sec:IllustrativeExamples}

We now illustrate the solutions (that we derived theoretically in the previous section) to showcase the decisions that result from a utility-maximizing decision maker who wants to satisfy different fairness constraints (PPV parity, FOR parity, sufficiency).
First, we demonstrate the form that these optimal decision rules take for different synthetic populations.
Second, we apply the solutions to real data.%
\footnote{Data and code to reproduce our results are available at \href{https://github.com/joebaumann/fair-prediction-based-decision-making}{https://github.com/joebaumann/fair-prediction-based-decision-making}.}

To present our results, we use a simple tuple notation $(\tau_1,\tau_2)$ (where $\tau_1$ denotes the lower- and $\tau_2$ the upper-bound), meaning that any individuals with a probability $p \in [\tau_1,\tau_2]$ is assigned the decision $D=1$ and $D=0$ otherwise.

\subsection{Synthetic Data Example}
\label{subsec:Synthetic-Data-Example}

For three different populations, all of which are composed of two groups (1 and 2) of individuals with probabilities drawn from a Beta distribution, we investigate the form the optimal fair decision rules take.
Table~\ref{tab:synthetic-data-and-solutions} list the detailed parameters for all populations.
Notice that the groups are equal in size in populations 1 and 2, but in population 3, group 1 is much smaller (just 10\% the size of group 0).
In all populations, group 0 is disadvantaged, meaning that it has a lower base rate (BR) than group 1: $BR_{A=0} < BR_{A=1}$ (just slightly lower in population 1, substantially lower in populations 2 and 3).
We present the solutions for decision rules that satisfy a fairness constraint (PPV parity, FOR parity, or sufficiency) while optimizing the decision maker's utility%
\footnote{
This hypothetical utility function represents a situation where a successful loan makes 7, but a default costs the bank 3.
}, which is defined as follows for all three populations:
\begin{equation}
    U=\sum_{i \in S} u_i, \;\; \text{ for } \;\; u_i=\begin{cases}
        7 p_i - 3 (1-p_i), & \text{for $D=1$}\\
        0, & \text{for $D=0$}
    \end{cases}
\label{equation:synthetic-example-utility-function}
\end{equation}
Hence, an individual's expected utility depends on the estimated repayment probability $p$.
Absent any fairness constraint, it is optimal for the bank to grant a loan to all individuals whose $p>t_0=0.3$ (as indicated with the red dashed line in the Figures~\ref{fig:PPV_parity-Population-1}-\ref{fig:PPV_parity-Population-3}).
\begin{table*}[t]
\centering
\caption{Parameters and solutions of the synthetic data example. The acronyms stand for group size ($n$), group distribution ($P$), base rate $BR$ (which results from $n$ and $P$), optimal threshold ($t_0$) and resulting PPV ($PPV_{t0}$) for unconstrained utility maximization, optimal thresholds ($t_1,t_2$) and resulting PPV ($PPV_{t0,t1}$) for utility maximization under PPV parity.}
\label{tab:synthetic-data-and-solutions}
\begin{tabular}{lll|cccccc|}
\cline{4-9}
                                                 &                                                  &                 & \multicolumn{2}{c|}{\textbf{Population 1}}                             & \multicolumn{2}{c|}{\textbf{Population 2}}                        & \multicolumn{2}{c|}{\textbf{Population 3}}   \\ \cline{4-9} 
\multicolumn{1}{c}{}                             &                                                  &                 & \multicolumn{1}{c|}{Group 0}         & \multicolumn{1}{c|}{Group 1}    & \multicolumn{1}{c|}{Group 0}    & \multicolumn{1}{c|}{Group 1}    & \multicolumn{1}{c|}{Group 0}    & Group 1    \\ \hline
\multicolumn{2}{|l|}{\multirow{3}{*}{parameters}}                                                   & $n$             & \multicolumn{5}{c|}{20,000}                                                                                                                                                  & 2,000      \\ \cline{3-9} 
\multicolumn{2}{|l|}{}                                                                              & $P$             & \multicolumn{1}{c|}{Beta(1.9, 1.35)} & \multicolumn{1}{c|}{Beta(3, 2)} & \multicolumn{1}{c|}{Beta(2, 3)} & \multicolumn{1}{c|}{Beta(3, 2)} & \multicolumn{1}{c|}{Beta(2, 3)} & Beta(3, 2) \\ \cline{3-9} 
\multicolumn{2}{|l|}{}                                                                              & $BR$            & \multicolumn{1}{c|}{0.58}            & \multicolumn{1}{c|}{0.60}       & \multicolumn{1}{c|}{0.39}       & \multicolumn{1}{c|}{0.60}       & \multicolumn{1}{c|}{0.39}       & 0.60       \\ \hline
\multicolumn{1}{|l|}{\multirow{4}{*}{solutions}} & \multicolumn{1}{l|}{unconstr.}                   & $t_0$           & \multicolumn{6}{c|}{0.30}                                                                                                                                                                  \\ \cline{3-9} 
\multicolumn{1}{|l|}{}                           & \multicolumn{1}{l|}{}                            & $PPV_{t_0}$     & \multicolumn{1}{c|}{0.65}            & \multicolumn{1}{c|}{0.63}       & \multicolumn{1}{c|}{0.51}       & \multicolumn{1}{c|}{0.63}       & \multicolumn{1}{c|}{0.51}       & 0.63       \\ \cline{2-9} 
\multicolumn{1}{|l|}{}                           & \multicolumn{1}{l|}{\multirow{2}{*}{PPV parity}} & $(t_1, t_2)$    & \multicolumn{1}{c|}{(0.27, 1)}       & \multicolumn{1}{c|}{(0.33, 1)}  & \multicolumn{1}{c|}{(0.44, 1)}  & \multicolumn{1}{c|}{(0.08, 1)}  & \multicolumn{1}{c|}{(0.37, 1)}  & (0, 0.84)  \\ \cline{3-9} 
\multicolumn{1}{|l|}{}                           & \multicolumn{1}{l|}{}                            & $PPV_{t_1,t_2}$ & \multicolumn{2}{c|}{0.64}                                              & \multicolumn{2}{c|}{0.60}                                         & \multicolumn{2}{c|}{0.56}                    \\ \hline
\end{tabular}
\end{table*}
\begin{figure}
\centering
\begin{subfigure}{0.37\textwidth}
    \vspace*{3mm}
    \centering
    \includegraphics[width=\textwidth]{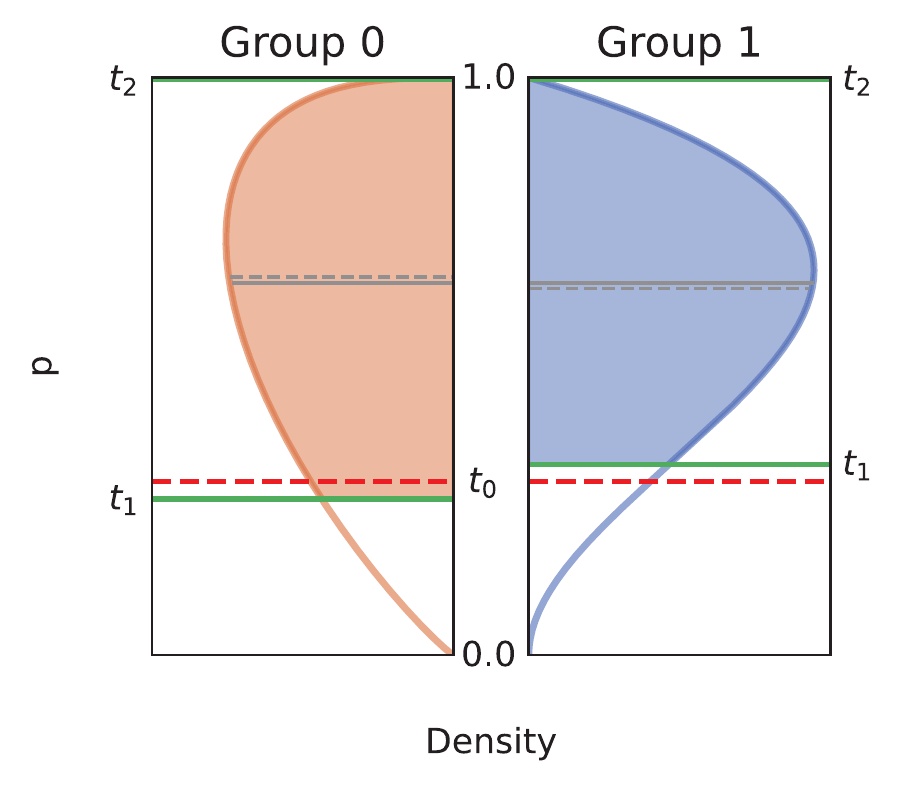}
    \caption{Population 1}
    \label{fig:PPV_parity-Population-1}
    \vspace*{5mm}
\end{subfigure}
\hfill
\begin{subfigure}{0.37\textwidth}
    \centering
    \includegraphics[width=\textwidth]{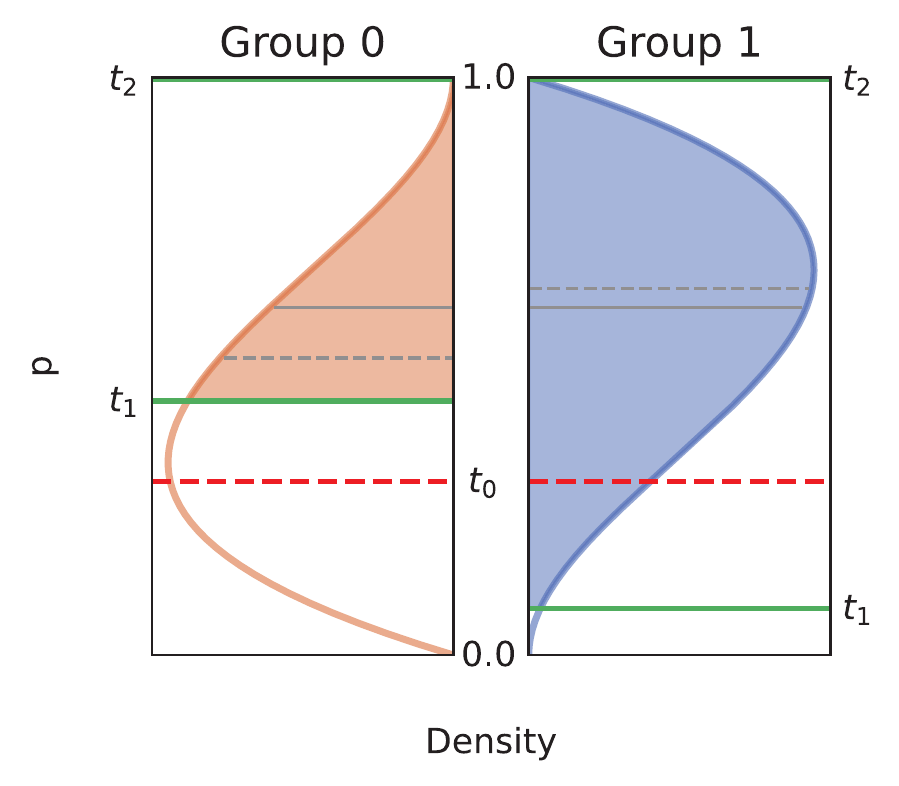}
    \caption{Population 2}
    \label{fig:PPV_parity-Population-2}
    \vspace*{5mm}
\end{subfigure}
\hfill
\begin{subfigure}{0.37\textwidth}
    \centering
    \includegraphics[width=\textwidth]{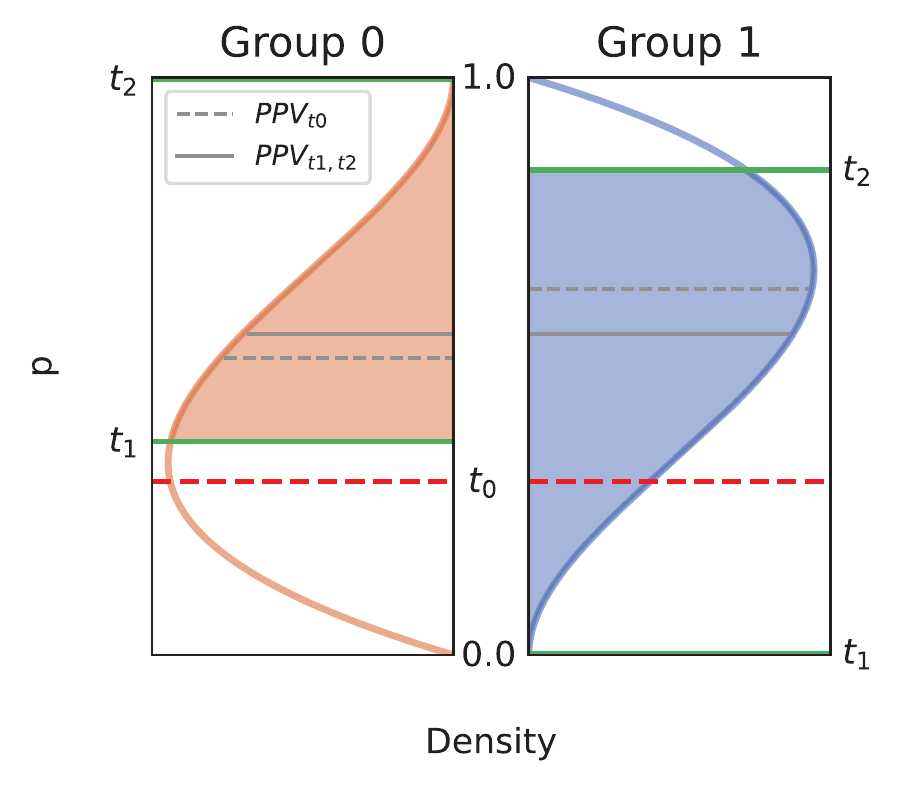}
    \caption{Population 3}
    \label{fig:PPV_parity-Population-3}
\end{subfigure}
\caption{Utility maximization under PPV parity (synthetic)}
\label{fig:PPV_parity}
\end{figure}

Figure~\ref{fig:PPV_parity} visualizes the probability densities of $p$ along with the optimal decision rules under PPV parity, which are different for each of the three populations.
Applying a single threshold $t_0$ results in unequal $PPVs$ for all of the three cases (see $PPV_{t0}$ in Figure~\ref{fig:PPV_parity}).
The solid green lines indicate the thresholds $t_1$ and $t_2$ that correspond to the optimal decision rule while satisfying PPV parity.
With this fairness constraint, $PPVs$ are equalized (see $PPV_{t1,t2}$ in Figure~\ref{fig:PPV_parity}).
But, the optimal decision rule used to achieve this depends on the population:
\begin{itemize}
    \item Population 1: Compared to the optimal solution without fairness ($t_0$), group 0's threshold is decreased while group 1's threshold is increased ($t_1^{group \: \mathit{0}} < t_0 < t_1^{group \: \mathit{1}}$) in order to equalize the two groups' $PPVs$.%
    \footnote{
    This result is not surprising as it is conceptually equivalent to solutions for other group fairness metrics.}
    \item Population 2: Unlike in population 1, in population 2, group 0's $PPV_{t0}$ is lower than the one of group 1.
    This means that the disadvantaged group 0 is held to a higher standard ($t_1^{group \: \mathit{0}} > t_0 > t_1^{group \: \mathit{1}}$) to satisfy PPV parity while maximizing the utility.
    This result is likely to occur in practice because, with the single threshold ($t_0$) rule that is used without any fairness constraint, the disadvantaged group's $PPV_{t0}$ is lower for groups with similar distributions.
    \item Population 3: Due to the mere difference in the group sizes (all else equal to population 2), it is much more ``costly'' to change group 0's threshold (relatively to group 1).
    Thus, in this situation, it is optimal to deviate less from group 0's unconstrained optimum.
    This results in an optimal $PPV=0.56$, which is lower than $BR_{A=1}$.
    For this reason, it is optimal to apply an upper-bound threshold for group 1 (set $t_1^{group \: \mathit{1}}=0$ and $t_2^{group \: \mathit{1}}<1$), i.e., deliberately disregarding those individuals with the highest probability of belonging to the positive class $Y=1$.
    This leads to an extreme form of within-group unfairness.
    It means that a utility-maximizing decision maker would ``sacrifice'' the best individuals (with a probability between $t_2^{group \: \mathit{1}}$ and $1$) of the smaller group 1 in favor of ``keeping'' individuals with a probability slightly above $t_2^{group \: \mathit{0}}$ in the bigger group 0.
    In the loan granting scenario, this would imply \textit{not} granting a loan to those individuals of group 1 that are most likely to repay.
    At the same time, group 1's individuals with the lowest repayment probability (i.e., those with a high probability of default) are granted a loan.
\end{itemize}
This example shows clearly that the optimal decision rules depend on the groups' probability distributions.
In some cases, this can lead to counter-intuitive solutions: it is possible that the disadvantaged group is held to a higher standard or that the most promising individuals of the advantaged group are omitted.

We present additional results (i.e., optimal decision rules under FOR parity and under sufficiency) for the synthetic data example in Appendix~\ref{appendix-sec:synthetic}.

\subsection{Real-World Example: COMPAS}

We now illustrate our results for the recidivism prediction case, using the ProPublica recidivism dataset%
\footnote{
We used the already pre-processed dataset named ``\textit{propublica-recidivism\_numerical.csv},'' which can be accessed here: \url{https://github.com/algofairness/fairness-comparison/tree/master/fairness/data/preprocessed}.
A detailed description of the COMPAS dataset and the use case is provided by~\cite{friedler2019comparative} and~\cite{angwin2016machine}.
}, which includes data from the COMPAS tool collected by~\cite{angwin2016machine}.
We trained a logistic regression (based on the implementation by scikit-learn~\cite{scikit-learn}) to predict probabilistic recidivism risk scores (achieving an overall accuracy of 0.69).

A decision maker has to transfer a risk score into a decision.
This involves weighing the severity of FP and FN in the utility function.
We present the utility-maximizing solutions for three possible settings, each one specified by different utility weights FP and FN, while TP=TN=1 is kept constant. These different utility functions are paired with different fairness requirements (no fairness constraint, PPV parity, and FOR parity) w.r.t. the protected attribute \textit{race}, which can take two values, \textit{Caucasian} ($c$) or \textit{non-Caucasian} ($nc$).
The class $Y=1$ denotes a recidivist, and each individual must either be detained ($D=1$) or released ($D=0$).
Figure~\ref{fig:COMPAS-score-distributions} shows the score distributions of the two groups.
The base rate of non-Caucasians (0.49) is higher than the one of Caucasians (0.4), indicating that non-Caucasians more likely to be predicted as being of high risk to recidivate, on average. 
The specified utility weights and the resulting optimal decisions for the different fairness requirements are presented in Table~\ref{tab:COMPAS-results-extended}.
The (un)constrained optimal decision rules differ largely across the three cases:
\begin{itemize}
    \item \textit{Case 1} represents a situation where a decision maker is indifferent about what is worse: incorrectly classifying an innocent person as guilty or releasing a defendant who goes on to recidivate.
    Thus, equal weights for FP and FN are chosen.
    For such a case, a lower-bound threshold of $t_{u1}=0.5$ is optimal from the decision maker's perspective.
    However, the two fairness metrics are not just satisfied by chance, because this threshold leads to different FORs and PPVs for the the two groups ($PPV_{A=c} < PPV_{A=nc}$ and $FOR_{A=c} < FOR_{A=nc}$).
    \item \textit{Case 2} showcases decision rules representing a shift towards protecting the innocent, therefore, using a much lower weight (-10) for FP.
    For the unconstrained setting, this results in fewer detained individuals overall, with an optimal lower-bound threshold of $t_{u2}=0.85$.
    As the two groups' distributions are similar above this threshold, their PPVs are almost the same, which is why just a slight adjustment of the group-specific thresholds is needed to satisfy PPV parity.
    In contrast, very different group-specific thresholds are optimal to satisfy FOR parity.
    Due to the lower BR of the non-Caucasian group (see the right-skewed distribution in Figure~\ref{fig:COMPAS-score-distributions}), it is optimal to release all Caucasians with a risk score below 0.98.
    This makes sure that released individuals are equally likely to recidivate across groups.
    \item \textit{Case 3} resembles a decision maker who cares more about punishing guilty than protecting innocent individuals, which is represented with a large negative value for FN.
    Absent any fairness constraint, this results in a lower optimal lower-bound threshold ($t_{u3}=0.15$), leading to more overall detentions.
    As opposed to case 2, this results in almost equal FORs (because the two groups' distributions are similar below the unconstrained threshold) but the two groups' PPVs differ largly.
    To satisfy PPV parity, it is optimal to detain almost all non-Caucasians (those with a risk score above $\tau_1 = 0.05$%
    \footnote{
    If the non-Caucasian group were much smaller, this would result in an upper-bound threshold, i.e., the non-Caucasian with the highest recidivism risk would be released -- which is similar to group 1 in the population 3's result in the synthetic example (see Figure~\ref{fig:PPV_parity-Population-3}).
    }) while detaining a much smaller fraction of Caucasians ($\tau_1 = 0.27$).
\end{itemize}
\begin{figure*}
  \centering
  \includegraphics[width=0.45\textwidth]{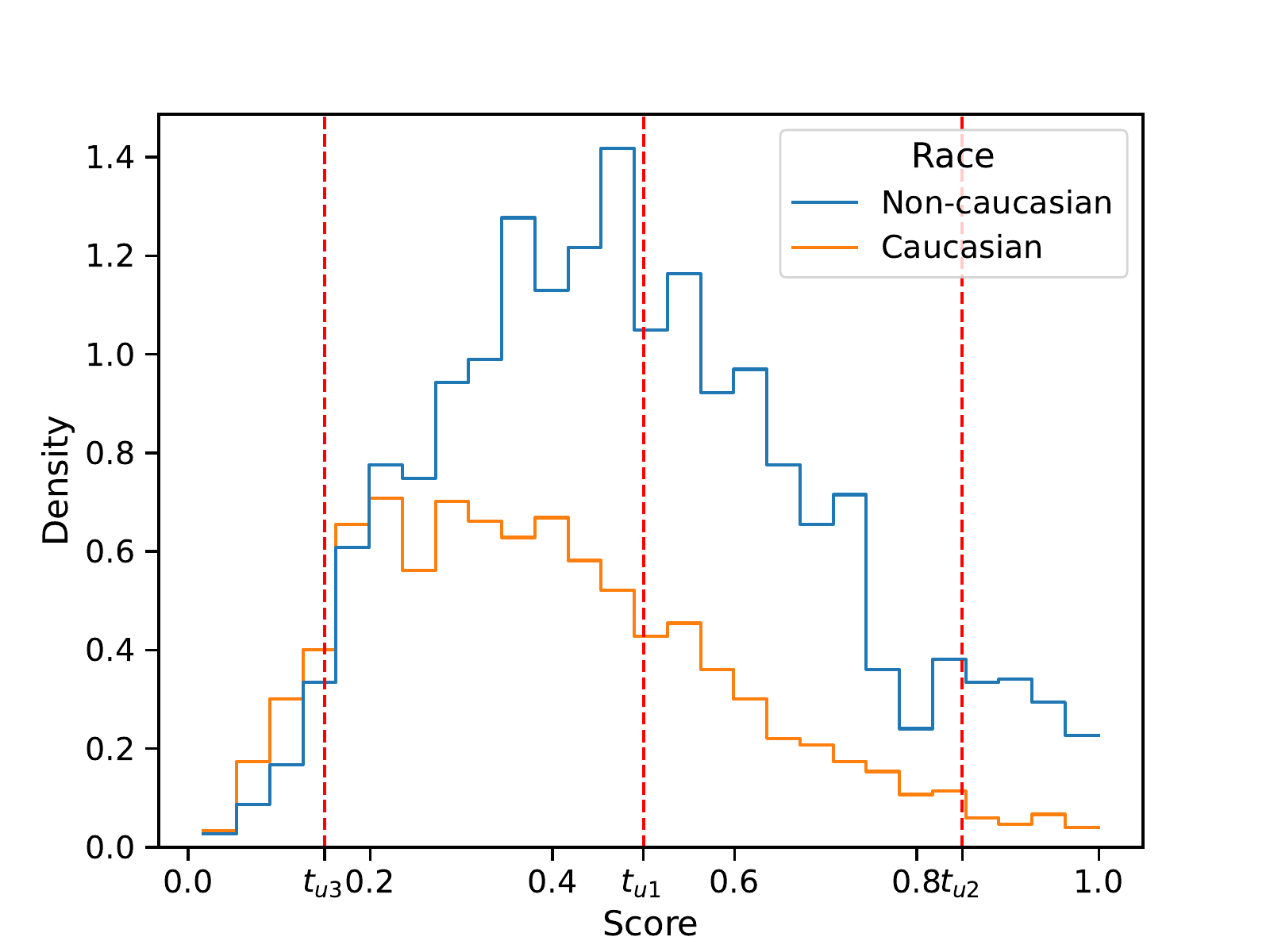}
  \caption{Score distributions by race and optimal unconstrained decision rules ($t_u$) for different utility functions (COMPAS)}
    \label{fig:COMPAS-score-distributions}
\end{figure*}
\begin{table*}
\centering
\caption{Optimal decision rules (COMPAS) for utility functions with different weights (TP, FP, FN, TN) paired with different fairness requirements (no fairness constraint, PPV parity, and FOR parity). The acronyms stand for base rate (BR), the optimal threshold ($t_u$) for unconstrained utility maximization, and the optimal thresholds ($t_1,t_2$) for utility maximization under fairness.}
\label{tab:COMPAS-results-extended}
\begin{tabular}{|l|l|c|c|c|c|c|c|} 
\cline{3-8}
\multicolumn{1}{l}{}        & \begin{tabular}[c]{@{}l@{}}\\\end{tabular} & \multicolumn{2}{c|}{\textbf{Case 1}}      & \multicolumn{2}{c|}{\textbf{Case 2}}      & \multicolumn{2}{c|}{\textbf{Case 3}}       \\ 
\cline{2-8}
\multicolumn{1}{l|}{}       &                                            & \textit{Caucasian} & \textit{non-Caucas.} & \textit{Caucasian} & \textit{non-Caucas.} & \textit{Caucasian} & \textit{non-Caucas.}  \\ 
\cline{2-8}
\multicolumn{1}{l|}{}       & BR                                         & 0.40               & 0.49                 & 0.40               & 0.49                 & 0.40               & 0.49                  \\ 
\cline{2-8}
\multicolumn{1}{l|}{}       & TP, FP, FN, TN                             & \multicolumn{2}{c|}{1, -1, -1, 1}         & \multicolumn{2}{c|}{1, -10, -1, 1}        & \multicolumn{2}{c|}{1, -1, -10, 1}         \\ 
\hline
\multirow{3}{*}{unconstr.}  & $t_u$                                      & \multicolumn{2}{c|}{$t_{u1}$ = 0.50}       & \multicolumn{2}{c|}{$t_{u2}$ = 0.85}      & \multicolumn{2}{c|}{$t_{u3}$ = 0.15}       \\ 
\cline{2-8}
                            & PPV                                        & 0.65               & 0.68                 & 0.92               & 0.92                 & 0.42               & 0.50                  \\ 
\cline{2-8}
                            & FOR                                        & 0.30               & 0.34                 & 0.38               & 0.46                 & 0.11               & 0.11                  \\ 
\hline
\multirow{3}{*}{PPV parity} & ($t_1,t_2$)                                & (0.52, 1)          & (0.49, 1)            & (0.84, 1)          & (0.85, 1)            & (0.27, 1)          & (0.05, 1)             \\ 
\cline{2-8}
                            & PPV                                        & \multicolumn{2}{c|}{0.67}                 & \multicolumn{2}{c|}{0.92}                 & \multicolumn{2}{c|}{0.49}                  \\ 
\cline{2-8}
                            & FOR                                        & 0.30               & 0.33                 & 0.38               & 0.46                 & 0.18               & 0.03                  \\ 
\hline
\multirow{3}{*}{FOR parity} & ($t_1,t_2$)                                & (0.57, 1)          & (0.47, 1)            & (0.98, 1)          & (0.62, 1)            & (0.16, 1)          & (0.15, 1)             \\ 
\cline{2-8}
                            & PPV                                        & 0.71               & 0.66                 & 0.99               & 0.76                 & 0.43               & 0.50                   \\ 
\cline{2-8}
                            & FOR                                        & \multicolumn{2}{c|}{0.32}                 & \multicolumn{2}{c|}{0.40}                 & \multicolumn{2}{c|}{0.11}                  \\
\hline
\end{tabular}
\end{table*}
Without fairness-enforcing restrictions, the same prediction model can turn out to be fair or unfair, w.r.t. a specific fairness metrics, depending on the utility function.
For example, in case 2, PPV parity is met in the unconstrained case, whereas there is a huge difference in PPVs in case 3.
Note, however, that this cannot be generalized: there is no guarantee that PPV parity or FOR parity are met in the unconstrained case for a given utility function, as this depends on the groups' probability distributions.
Thus, assuming that a prediction model is fair if it meets PPV parity or FOR parity is misleading because this only holds for specific utility functions and probability distributions but not in general.
This contradicts the approach suggested by Northpointe, who claim that PPV and FOR are the only relevant measures to determine the treatment disparity of such a tool for different groups~\cite{Dieterich2016}.
Interestingly, for the COMPAS example, introducing fairness constraints (in the form of PPV parity or FOR parity) leads to a lower group-specific threshold for the non-Caucasians, resulting in a higher fraction of detained individuals for the disadvantaged group -- which is similar to the population 3's result in the synthetic example (see Figure~\ref{fig:PPV_parity-Population-2}).%
\footnote{
There is just one exception to this: enforcing FOR parity in case 2 leads to a slightly higher threshold for the disadvantaged group, which is similar to the population 1's result in the synthetic example (see Figure~\ref{fig:PPV_parity-Population-1}).
}
Further, in some cases, it is optimal to release almost all individuals of the advantaged group or to detain almost all individuals of the disadvantaged group.
This is counter-intuitive as one would expect that introducing a fairness constraint should favor the disadvantaged group.

\section{Conclusions}
\label{sec:Conclusions}

In this paper, we analyze common group fairness metrics that have been proposed to mitigate the unfairness of algorithmic decision making systems.
We formulate algorithmic fairness as a constrained optimization problem representing a decision maker who wants to maximize the total utility while also satisfying a fairness constraint.
A similar solution has been provided by~\cite{hardt2016equality,10.1145/3097983.3098095} for the group fairness metrics (conditional) statistical parity, TPR parity, and FPR parity -- all leading to group-specific lower-bound thresholds.
In contrast to these fairness metrics, we find that for the group fairness metrics PPV parity and FOR parity, optimal decision rules take the form of group-specific lower-bound or upper-bound thresholds.
This is counter-intuitive as it means that, in certain situations, it can be optimal for decision makers to select the `worst' individuals of one group and omit the most promising ones.
In the loan granting scenario, for one of the groups, this would mean that individuals who are most likely to default are granted a loan, whereas those who are most likely to pay back their loan are not granted one.
Similarly, to achieve PPV parity in recidivism risk prediction, it can be optimal to release defendants with the highest recidivism risk in one of the groups.
Additionally, our work shows that there is a trade-off between the group fairness criterion sufficiency and within-group fairness.
Namely, to satisfy sufficiency, it is optimal to sacrifice within-group fairness for all but one of the groups.

Experts increasingly call for fairer algorithms.
Considering these byproducts of the group fairness metrics PPV parity, FOR parity, and sufficiency, we emphasize that these potential consequences must be considered when imposing such fairness criteria on utility-maximizing decision makers.
We hope that our findings foster the discussion of fair algorithmic decision making and, in particular, support policymakers who find themselves in the position where they need to choose a specific definition of fairness.

\begin{acks}
We thank our three anonymous reviewers for their helpful feedback.
This work was supported by Innosuisse -- grant number 44692.1 IP-SBM -- and by the National Research Programme
``Digital Transformation'' (NRP 77) of the Swiss National Science
Foundation (SNSF) -- grant number 187473.
\end{acks}

\bibliographystyle{ACM-Reference-Format}
\bibliography{main}


\begin{thebibliography}{47}


\ifx \showCODEN    \undefined \def \showCODEN     #1{\unskip}     \fi
\ifx \showDOI      \undefined \def \showDOI       #1{#1}\fi
\ifx \showISBNx    \undefined \def \showISBNx     #1{\unskip}     \fi
\ifx \showISBNxiii \undefined \def \showISBNxiii  #1{\unskip}     \fi
\ifx \showISSN     \undefined \def \showISSN      #1{\unskip}     \fi
\ifx \showLCCN     \undefined \def \showLCCN      #1{\unskip}     \fi
\ifx \shownote     \undefined \def \shownote      #1{#1}          \fi
\ifx \showarticletitle \undefined \def \showarticletitle #1{#1}   \fi
\ifx \showURL      \undefined \def \showURL       {\relax}        \fi
\providecommand\bibfield[2]{#2}
\providecommand\bibinfo[2]{#2}
\providecommand\natexlab[1]{#1}
\providecommand\showeprint[2][]{arXiv:#2}

\bibitem[Angwin et~al\mbox{.}(2016)]%
        {angwin2016machine}
\bibfield{author}{\bibinfo{person}{Julia Angwin}, \bibinfo{person}{Jeff
  Larson}, \bibinfo{person}{Surya Mattu}, {and} \bibinfo{person}{Lauren
  Kirchner}.} \bibinfo{year}{2016}\natexlab{}.
\newblock \showarticletitle{{Machine bias}}.
\newblock \bibinfo{journal}{\emph{ProPublica, May}} \bibinfo{volume}{23},
  \bibinfo{number}{2016} (\bibinfo{year}{2016}), \bibinfo{pages}{139--159}.
\newblock
\urldef\tempurl%
\url{https://www.propublica.org/article/machine-bias-risk-assessments-in-criminal-sentencing}
\showURL{%
\tempurl}


\bibitem[Barocas et~al\mbox{.}(2019)]%
        {barocas-hardt-narayanan}
\bibfield{author}{\bibinfo{person}{Solon Barocas}, \bibinfo{person}{Moritz
  Hardt}, {and} \bibinfo{person}{Arvind Narayanan}.}
  \bibinfo{year}{2019}\natexlab{}.
\newblock \bibinfo{booktitle}{\emph{{Fairness and Machine Learning}}}.
\newblock \bibinfo{publisher}{fairmlbook.org}.
\newblock


\bibitem[Barocas and Selbst(2016)]%
        {10.2307/24758720}
\bibfield{author}{\bibinfo{person}{Solon Barocas} {and}
  \bibinfo{person}{Andrew~D Selbst}.} \bibinfo{year}{2016}\natexlab{}.
\newblock \showarticletitle{{Big Data's Disparate Impact}}.
\newblock \bibinfo{journal}{\emph{California Law Review}}
  \bibinfo{volume}{104}, \bibinfo{number}{3} (\bibinfo{year}{2016}),
  \bibinfo{pages}{671--732}.
\newblock
\showISSN{00081221}
\urldef\tempurl%
\url{http://www.jstor.org/stable/24758720}
\showURL{%
\tempurl}


\bibitem[Baumann and Heitz(2022)]%
        {baumann2022SDS_fairness_principle}
\bibfield{author}{\bibinfo{person}{Joachim Baumann} {and}
  \bibinfo{person}{Christoph Heitz}.} \bibinfo{year}{2022}\natexlab{}.
\newblock \showarticletitle{{Group Fairness in Prediction-Based Decision
  Making: From Moral Assessment to Implementation}}. In
  \bibinfo{booktitle}{\emph{2022 9th Swiss Conference on Data Science (SDS)}}.
\newblock


\bibitem[Berk et~al\mbox{.}(2021)]%
        {berk2021criminal}
\bibfield{author}{\bibinfo{person}{Richard Berk}, \bibinfo{person}{Hoda
  Heidari}, \bibinfo{person}{Shahin Jabbari}, \bibinfo{person}{Michael Kearns},
  {and} \bibinfo{person}{Aaron Roth}.} \bibinfo{year}{2021}\natexlab{}.
\newblock \showarticletitle{{Fairness in Criminal Justice Risk Assessments: The
  State of the Art}}.
\newblock \bibinfo{journal}{\emph{Sociological Methods \& Research}}
  \bibinfo{volume}{50}, \bibinfo{number}{1} (\bibinfo{year}{2021}),
  \bibinfo{pages}{3--44}.
\newblock
\urldef\tempurl%
\url{https://doi.org/10.1177/0049124118782533}
\showDOI{\tempurl}


\bibitem[Canetti et~al\mbox{.}(2019)]%
        {canetti2019soft}
\bibfield{author}{\bibinfo{person}{Ran Canetti}, \bibinfo{person}{Aloni Cohen},
  \bibinfo{person}{Nishanth Dikkala}, \bibinfo{person}{Govind Ramnarayan},
  \bibinfo{person}{Sarah Scheffler}, {and} \bibinfo{person}{Adam Smith}.}
  \bibinfo{year}{2019}\natexlab{}.
\newblock \showarticletitle{{From Soft Classifiers to Hard Decisions}}. In
  \bibinfo{booktitle}{\emph{Proceedings of the Conference on Fairness,
  Accountability, and Transparency}}. \bibinfo{publisher}{ACM},
  \bibinfo{address}{New York, NY, USA}, \bibinfo{pages}{309--318}.
\newblock
\showISBNx{9781450361255}
\urldef\tempurl%
\url{https://doi.org/10.1145/3287560.3287561}
\showDOI{\tempurl}


\bibitem[Caton and Haas(2020)]%
        {caton2020fairness}
\bibfield{author}{\bibinfo{person}{Simon Caton} {and}
  \bibinfo{person}{Christian Haas}.} \bibinfo{year}{2020}\natexlab{}.
\newblock \bibinfo{title}{{Fairness in Machine Learning: A Survey}}.
  (\bibinfo{year}{2020}).
\newblock
\showeprint[arxiv]{2010.04053}


\bibitem[Chouldechova(2017)]%
        {Chouldechova2017}
\bibfield{author}{\bibinfo{person}{Alexandra Chouldechova}.}
  \bibinfo{year}{2017}\natexlab{}.
\newblock \showarticletitle{{Fair Prediction with Disparate Impact: A Study of
  Bias in Recidivism Prediction Instruments}}.
\newblock \bibinfo{journal}{\emph{Big data}} \bibinfo{volume}{5},
  \bibinfo{number}{2} (\bibinfo{date}{jun} \bibinfo{year}{2017}),
  \bibinfo{pages}{153--163}.
\newblock
\showISSN{2167-647X (Electronic)}
\urldef\tempurl%
\url{https://doi.org/10.1089/big.2016.0047}
\showDOI{\tempurl}


\bibitem[Corbett-Davies and Goel(2018)]%
        {Corbett-Davies2018}
\bibfield{author}{\bibinfo{person}{Sam Corbett-Davies} {and}
  \bibinfo{person}{Sharad Goel}.} \bibinfo{year}{2018}\natexlab{}.
\newblock \bibinfo{title}{{The Measure and Mismeasure of Fairness: A Critical
  Review of Fair Machine Learning}}.
\newblock
\newblock
\showeprint[arxiv]{1808.00023}~[cs.CY]
\urldef\tempurl%
\url{https://arxiv.org/abs/1808.00023}
\showURL{%
\tempurl}


\bibitem[Corbett-Davies et~al\mbox{.}(2017)]%
        {10.1145/3097983.3098095}
\bibfield{author}{\bibinfo{person}{Sam Corbett-Davies}, \bibinfo{person}{Emma
  Pierson}, \bibinfo{person}{Avi Feller}, \bibinfo{person}{Sharad Goel}, {and}
  \bibinfo{person}{Aziz Huq}.} \bibinfo{year}{2017}\natexlab{}.
\newblock \showarticletitle{{Algorithmic Decision Making and the Cost of
  Fairness}}. In \bibinfo{booktitle}{\emph{Proceedings of the 23rd ACM SIGKDD
  International Conference on Knowledge Discovery and Data Mining}}
  \emph{(\bibinfo{series}{KDD '17})}. \bibinfo{publisher}{Association for
  Computing Machinery}, \bibinfo{address}{New York, NY, USA},
  \bibinfo{pages}{797--806}.
\newblock
\showISBNx{9781450348874}
\urldef\tempurl%
\url{https://doi.org/10.1145/3097983.3098095}
\showDOI{\tempurl}


\bibitem[Dieterich et~al\mbox{.}(2016)]%
        {Dieterich2016}
\bibfield{author}{\bibinfo{person}{William Dieterich},
  \bibinfo{person}{Christina Mendoza}, {and} \bibinfo{person}{Tim Brennan}.}
  \bibinfo{year}{2016}\natexlab{}.
\newblock \bibinfo{booktitle}{\emph{{COMPAS Risk Scales: Demonstrating Accuracy
  Equity and Predictive Parity}}}.
\newblock \bibinfo{type}{{T}echnical {R}eport}.
  \bibinfo{institution}{Northpoint Inc}.
\newblock
\urldef\tempurl%
\url{https://www.equivant.com/response-to-propublica-demonstrating-accuracy-equity-and-predictive-parity/}
\showURL{%
\tempurl}


\bibitem[Dwork et~al\mbox{.}(2012)]%
        {Dwork2012}
\bibfield{author}{\bibinfo{person}{Cynthia Dwork}, \bibinfo{person}{Moritz
  Hardt}, \bibinfo{person}{Toniann Pitassi}, \bibinfo{person}{Omer Reingold},
  {and} \bibinfo{person}{Richard Zemel}.} \bibinfo{year}{2012}\natexlab{}.
\newblock \showarticletitle{{Fairness through awareness}}. In
  \bibinfo{booktitle}{\emph{ITCS 2012 - Innovations in Theoretical Computer
  Science Conference}}. \bibinfo{publisher}{ACM Press}, \bibinfo{address}{New
  York, New York, USA}, \bibinfo{pages}{214--226}.
\newblock
\showISBNx{9781450311151}
\urldef\tempurl%
\url{https://doi.org/10.1145/2090236.2090255}
\showDOI{\tempurl}
\showeprint[arxiv]{1104.3913}


\bibitem[Dwork et~al\mbox{.}(2018)]%
        {pmlr-v81-dwork18a}
\bibfield{author}{\bibinfo{person}{Cynthia Dwork}, \bibinfo{person}{Nicole
  Immorlica}, \bibinfo{person}{Adam~Tauman Kalai}, {and} \bibinfo{person}{Max
  Leiserson}.} \bibinfo{year}{2018}\natexlab{}.
\newblock \showarticletitle{{Decoupled Classifiers for Group-Fair and Efficient
  Machine Learning}}. In \bibinfo{booktitle}{\emph{Proceedings of the 1st
  Conference on Fairness, Accountability and Transparency}}
  \emph{(\bibinfo{series}{Proceedings of Machine Learning Research},
  Vol.~\bibinfo{volume}{81})}, \bibfield{editor}{\bibinfo{person}{Sorelle~A
  Friedler} {and} \bibinfo{person}{Christo Wilson}} (Eds.).
  \bibinfo{publisher}{PMLR}, \bibinfo{address}{New York, NY, USA},
  \bibinfo{pages}{119--133}.
\newblock
\urldef\tempurl%
\url{http://proceedings.mlr.press/v81/dwork18a.html}
\showURL{%
\tempurl}


\bibitem[Feldman et~al\mbox{.}(2015)]%
        {10.1145/2783258.2783311}
\bibfield{author}{\bibinfo{person}{Michael Feldman}, \bibinfo{person}{Sorelle~A
  Friedler}, \bibinfo{person}{John Moeller}, \bibinfo{person}{Carlos
  Scheidegger}, {and} \bibinfo{person}{Suresh Venkatasubramanian}.}
  \bibinfo{year}{2015}\natexlab{}.
\newblock \showarticletitle{{Certifying and Removing Disparate Impact}}. In
  \bibinfo{booktitle}{\emph{Proceedings of the 21th ACM SIGKDD International
  Conference on Knowledge Discovery and Data Mining}}
  \emph{(\bibinfo{series}{KDD '15})}. \bibinfo{publisher}{Association for
  Computing Machinery}, \bibinfo{address}{New York, NY, USA},
  \bibinfo{pages}{259--268}.
\newblock
\showISBNx{9781450336642}
\urldef\tempurl%
\url{https://doi.org/10.1145/2783258.2783311}
\showDOI{\tempurl}


\bibitem[Fish et~al\mbox{.}(2016)]%
        {fish2016}
\bibfield{author}{\bibinfo{person}{Benjamin Fish}, \bibinfo{person}{Jeremy
  Kun}, {and} \bibinfo{person}{{\'{A}}d{\'{a}}m~D Lelkes}.}
  \bibinfo{year}{2016}\natexlab{}.
\newblock \bibinfo{booktitle}{\emph{{A Confidence-Based Approach for Balancing
  Fairness and Accuracy}}}.
\newblock \bibinfo{pages}{144--152}.
\newblock
\urldef\tempurl%
\url{https://doi.org/10.1137/1.9781611974348.17}
\showDOI{\tempurl}


\bibitem[Friedler et~al\mbox{.}(2016)]%
        {Friedler2016}
\bibfield{author}{\bibinfo{person}{Sorelle~A Friedler}, \bibinfo{person}{Carlos
  Scheidegger}, {and} \bibinfo{person}{Suresh Venkatasubramanian}.}
  \bibinfo{year}{2016}\natexlab{}.
\newblock \bibinfo{title}{{On the (im)possibility of fairness}}.
\newblock
\newblock
\showeprint[arxiv]{1609.07236}~[cs.CY]
\urldef\tempurl%
\url{https://arxiv.org/abs/1609.07236}
\showURL{%
\tempurl}


\bibitem[Friedler et~al\mbox{.}(2019)]%
        {friedler2019comparative}
\bibfield{author}{\bibinfo{person}{Sorelle~A Friedler}, \bibinfo{person}{Carlos
  Scheidegger}, \bibinfo{person}{Suresh Venkatasubramanian},
  \bibinfo{person}{Sonam Choudhary}, \bibinfo{person}{Evan~P Hamilton}, {and}
  \bibinfo{person}{Derek Roth}.} \bibinfo{year}{2019}\natexlab{}.
\newblock \showarticletitle{{A Comparative Study of Fairness-Enhancing
  Interventions in Machine Learning}}. In \bibinfo{booktitle}{\emph{Proceedings
  of the Conference on Fairness, Accountability, and Transparency}}
  \emph{(\bibinfo{series}{FAT* '19})}. \bibinfo{publisher}{Association for
  Computing Machinery}, \bibinfo{address}{New York, NY, USA},
  \bibinfo{pages}{329--338}.
\newblock
\showISBNx{9781450361255}
\urldef\tempurl%
\url{https://doi.org/10.1145/3287560.3287589}
\showDOI{\tempurl}


\bibitem[Fuster et~al\mbox{.}(2017)]%
        {Fuster2017}
\bibfield{author}{\bibinfo{person}{Andreas Fuster}, \bibinfo{person}{Paul
  Goldsmith-Pinkham}, \bibinfo{person}{Tarun Ramadorai}, {and}
  \bibinfo{person}{Ansgar Walther}.} \bibinfo{year}{2017}\natexlab{}.
\newblock \showarticletitle{{Predictably Unequal? The Effects of Machine
  Learning on Credit Markets}}.
\newblock \bibinfo{journal}{\emph{SSRN}} (\bibinfo{date}{nov}
  \bibinfo{year}{2017}).
\newblock
\showISSN{1556-5068}
\urldef\tempurl%
\url{https://doi.org/10.2139/ssrn.3072038}
\showDOI{\tempurl}


\bibitem[Garg et~al\mbox{.}(2020)]%
        {garg2020fairness}
\bibfield{author}{\bibinfo{person}{Pratyush Garg}, \bibinfo{person}{John
  Villasenor}, {and} \bibinfo{person}{Virginia Foggo}.}
  \bibinfo{year}{2020}\natexlab{}.
\newblock \showarticletitle{{Fairness metrics: A comparative analysis}}. In
  \bibinfo{booktitle}{\emph{2020 IEEE International Conference on Big Data (Big
  Data)}}. IEEE, \bibinfo{publisher}{IEEE Computer Society},
  \bibinfo{address}{Los Alamitos, CA, USA}, \bibinfo{pages}{3662--3666}.
\newblock


\bibitem[Hardt et~al\mbox{.}(2016)]%
        {hardt2016equality}
\bibfield{author}{\bibinfo{person}{Moritz Hardt}, \bibinfo{person}{Eric Price},
  {and} \bibinfo{person}{Nathan Srebro}.} \bibinfo{year}{2016}\natexlab{}.
\newblock \showarticletitle{{Equality of opportunity in supervised learning}}.
  In \bibinfo{booktitle}{\emph{Advances in Neural Information Processing
  Systems}} \emph{(\bibinfo{series}{NIPS'16})}. \bibinfo{publisher}{Curran
  Associates Inc.}, \bibinfo{address}{Red Hook, NY, USA},
  \bibinfo{pages}{3323--3331}.
\newblock
\showISBNx{9781510838819}
\showISSN{10495258}
\showeprint[arxiv]{1610.02413}


\bibitem[Kamiran et~al\mbox{.}(2013)]%
        {Kamiran2013}
\bibfield{author}{\bibinfo{person}{Faisal Kamiran}, \bibinfo{person}{Indrė
  {\v{Z}}liobaitė}, {and} \bibinfo{person}{Toon Calders}.}
  \bibinfo{year}{2013}\natexlab{}.
\newblock \showarticletitle{{Quantifying explainable discrimination and
  removing illegal discrimination in automated decision making}}.
\newblock \bibinfo{journal}{\emph{Knowledge and Information Systems}}
  \bibinfo{volume}{35}, \bibinfo{number}{3} (\bibinfo{year}{2013}),
  \bibinfo{pages}{613--644}.
\newblock
\showISSN{0219-3116}
\urldef\tempurl%
\url{https://doi.org/10.1007/s10115-012-0584-8}
\showDOI{\tempurl}


\bibitem[Kasy and Abebe(2021)]%
        {Kasy2021}
\bibfield{author}{\bibinfo{person}{Maximilian Kasy} {and}
  \bibinfo{person}{Rediet Abebe}.} \bibinfo{year}{2021}\natexlab{}.
\newblock \showarticletitle{{Fairness, Equality, and Power in Algorithmic
  Decision-Making}}. In \bibinfo{booktitle}{\emph{Proceedings of the 2021 ACM
  Conference on Fairness, Accountability, and Transparency}},
  Vol.~\bibinfo{volume}{11}. \bibinfo{publisher}{ACM}, \bibinfo{address}{New
  York, NY, USA}, \bibinfo{pages}{576--586}.
\newblock
\showISBNx{9781450383097}
\urldef\tempurl%
\url{https://doi.org/10.1145/3442188.3445919}
\showDOI{\tempurl}


\bibitem[Kearns and Roth(2019)]%
        {Kearns2019EthicalAlgorithm}
\bibfield{author}{\bibinfo{person}{Michael Kearns} {and} \bibinfo{person}{Aaron
  Roth}.} \bibinfo{year}{2019}\natexlab{}.
\newblock \bibinfo{booktitle}{\emph{{The Ethical Algorithm: The Science of
  Socially Aware Algorithm Design}}}.
\newblock \bibinfo{publisher}{Oxford University Press, Inc.},
  \bibinfo{address}{USA}.
\newblock
\showISBNx{0190948205}


\bibitem[Kleinberg et~al\mbox{.}(2017)]%
        {kleinberg2017humandecisions}
\bibfield{author}{\bibinfo{person}{Jon Kleinberg}, \bibinfo{person}{Himabindu
  Lakkaraju}, \bibinfo{person}{Jure Leskovec}, \bibinfo{person}{Jens Ludwig},
  {and} \bibinfo{person}{Sendhil Mullainathan}.}
  \bibinfo{year}{2017}\natexlab{}.
\newblock \showarticletitle{{Human Decisions and Machine Predictions*}}.
\newblock \bibinfo{journal}{\emph{The Quarterly Journal of Economics}}
  \bibinfo{volume}{133}, \bibinfo{number}{1} (\bibinfo{year}{2017}),
  \bibinfo{pages}{237--293}.
\newblock
\showISSN{0033-5533}
\urldef\tempurl%
\url{https://doi.org/10.1093/qje/qjx032}
\showDOI{\tempurl}


\bibitem[Kleinberg et~al\mbox{.}(2016)]%
        {Kleinberg2016}
\bibfield{author}{\bibinfo{person}{Jon Kleinberg}, \bibinfo{person}{Sendhil
  Mullainathan}, {and} \bibinfo{person}{Manish Raghavan}.}
  \bibinfo{year}{2016}\natexlab{}.
\newblock \bibinfo{title}{{Inherent Trade-Offs in the Fair Determination of
  Risk Scores}}.  (\bibinfo{year}{2016}).
\newblock
\showeprint[arxiv]{1609.05807v2}


\bibitem[Kozodoi et~al\mbox{.}(2022)]%
        {kozodoi2022credit-scoring}
\bibfield{author}{\bibinfo{person}{Nikita Kozodoi}, \bibinfo{person}{Johannes
  Jacob}, {and} \bibinfo{person}{Stefan Lessmann}.}
  \bibinfo{year}{2022}\natexlab{}.
\newblock \showarticletitle{{Fairness in credit scoring: Assessment,
  implementation and profit implications}}.
\newblock \bibinfo{journal}{\emph{European Journal of Operational Research}}
  \bibinfo{volume}{297}, \bibinfo{number}{3} (\bibinfo{year}{2022}),
  \bibinfo{pages}{1083--1094}.
\newblock
\showISSN{0377-2217}
\urldef\tempurl%
\url{https://doi.org/10.1016/j.ejor.2021.06.023}
\showDOI{\tempurl}


\bibitem[Kuppler et~al\mbox{.}(2021)]%
        {Kuppler2021}
\bibfield{author}{\bibinfo{person}{Matthias Kuppler},
  \bibinfo{person}{Christoph Kern}, \bibinfo{person}{Ruben~L Bach}, {and}
  \bibinfo{person}{Frauke Kreuter}.} \bibinfo{year}{2021}\natexlab{}.
\newblock \bibinfo{title}{{Distributive Justice and Fairness Metrics in
  Automated Decision-making: How Much Overlap Is There?}}
  (\bibinfo{year}{2021}).
\newblock
\showeprint[arxiv]{2105.01441v2}


\bibitem[Leben(2020)]%
        {Leben2020}
\bibfield{author}{\bibinfo{person}{Derek Leben}.}
  \bibinfo{year}{2020}\natexlab{}.
\newblock \bibinfo{booktitle}{\emph{{Normative Principles for Evaluating
  Fairness in Machine Learning}}}.
\newblock \bibinfo{publisher}{Association for Computing Machinery},
  \bibinfo{address}{New York, NY, USA}, \bibinfo{pages}{86--92}.
\newblock
\showISBNx{9781450371100}
\urldef\tempurl%
\url{https://doi.org/10.1145/3375627.3375808}
\showURL{%
\tempurl}


\bibitem[Lee et~al\mbox{.}(2021)]%
        {lee2021fair-selective-classification}
\bibfield{author}{\bibinfo{person}{Joshua~K Lee}, \bibinfo{person}{Yuheng Bu},
  \bibinfo{person}{Deepta Rajan}, \bibinfo{person}{Prasanna Sattigeri},
  \bibinfo{person}{Rameswar Panda}, \bibinfo{person}{Subhro Das}, {and}
  \bibinfo{person}{Gregory~W Wornell}.} \bibinfo{year}{2021}\natexlab{}.
\newblock \showarticletitle{{Fair Selective Classification Via Sufficiency}}.
  In \bibinfo{booktitle}{\emph{Proceedings of the 38th International Conference
  on Machine Learning}} \emph{(\bibinfo{series}{Proceedings of Machine Learning
  Research}, Vol.~\bibinfo{volume}{139})},
  \bibfield{editor}{\bibinfo{person}{Marina Meila} {and} \bibinfo{person}{Tong
  Zhang}} (Eds.). \bibinfo{publisher}{PMLR}, \bibinfo{pages}{6076--6086}.
\newblock
\urldef\tempurl%
\url{https://proceedings.mlr.press/v139/lee21b.html}
\showURL{%
\tempurl}


\bibitem[Lipton et~al\mbox{.}(2018)]%
        {lipton2018does}
\bibfield{author}{\bibinfo{person}{Zachary~C Lipton},
  \bibinfo{person}{Alexandra Chouldechova}, {and} \bibinfo{person}{Julian
  McAuley}.} \bibinfo{year}{2018}\natexlab{}.
\newblock \showarticletitle{{Does mitigating ML's impact disparity require
  treatment disparity?}}. In \bibinfo{booktitle}{\emph{Proceedings of the 32nd
  International Conference on Neural Information Processing Systems}}.
  \bibinfo{publisher}{Curran Associates, Inc.}, \bibinfo{pages}{8136--8146}.
\newblock
\urldef\tempurl%
\url{https://proceedings.neurips.cc/paper/2018/file/8e0384779e58ce2af40eb365b318cc32-Paper.pdf}
\showURL{%
\tempurl}


\bibitem[Liu et~al\mbox{.}(2019)]%
        {liu19implicit-criterion}
\bibfield{author}{\bibinfo{person}{Lydia~T Liu}, \bibinfo{person}{Max
  Simchowitz}, {and} \bibinfo{person}{Moritz Hardt}.}
  \bibinfo{year}{2019}\natexlab{}.
\newblock \showarticletitle{{The Implicit Fairness Criterion of Unconstrained
  Learning}}. In \bibinfo{booktitle}{\emph{Proceedings of the 36th
  International Conference on Machine Learning}}
  \emph{(\bibinfo{series}{Proceedings of Machine Learning Research},
  Vol.~\bibinfo{volume}{97})}, \bibfield{editor}{\bibinfo{person}{Kamalika
  Chaudhuri} {and} \bibinfo{person}{Ruslan Salakhutdinov}} (Eds.).
  \bibinfo{publisher}{PMLR}, \bibinfo{pages}{4051--4060}.
\newblock
\urldef\tempurl%
\url{https://proceedings.mlr.press/v97/liu19f.html}
\showURL{%
\tempurl}


\bibitem[Makhlouf et~al\mbox{.}(2021)]%
        {Makhlouf2021Applicability}
\bibfield{author}{\bibinfo{person}{Karima Makhlouf}, \bibinfo{person}{Sami
  Zhioua}, {and} \bibinfo{person}{Catuscia Palamidessi}.}
  \bibinfo{year}{2021}\natexlab{}.
\newblock \showarticletitle{{On the Applicability of Machine Learning Fairness
  Notions}}.
\newblock \bibinfo{journal}{\emph{SIGKDD Explor. Newsl.}} \bibinfo{volume}{23},
  \bibinfo{number}{1} (\bibinfo{date}{may} \bibinfo{year}{2021}),
  \bibinfo{pages}{14--23}.
\newblock
\showISSN{1931-0145}
\urldef\tempurl%
\url{https://doi.org/10.1145/3468507.3468511}
\showDOI{\tempurl}


\bibitem[Mayson(2018)]%
        {mayson2018bias}
\bibfield{author}{\bibinfo{person}{Sandra~Gabriel Mayson}.}
  \bibinfo{year}{2018}\natexlab{}.
\newblock \showarticletitle{{Bias In, Bias Out}}.
\newblock \bibinfo{journal}{\emph{Yale Law Journal}}  \bibinfo{volume}{128}
  (\bibinfo{year}{2018}), \bibinfo{pages}{2218}.
\newblock
\urldef\tempurl%
\url{https://ssrn.com/abstract=3257004}
\showURL{%
\tempurl}


\bibitem[Mehrabi et~al\mbox{.}(2019)]%
        {mehrabi2019survey}
\bibfield{author}{\bibinfo{person}{Ninareh Mehrabi}, \bibinfo{person}{Fred
  Morstatter}, \bibinfo{person}{Nripsuta Saxena}, \bibinfo{person}{Kristina
  Lerman}, {and} \bibinfo{person}{Aram Galstyan}.}
  \bibinfo{year}{2019}\natexlab{}.
\newblock \bibinfo{title}{{A survey on bias and fairness in machine learning}}.
   (\bibinfo{year}{2019}).
\newblock
\showeprint[arxiv]{1908.09635}


\bibitem[Menon and Williamson(2018)]%
        {pmlr-v81-menon18a}
\bibfield{author}{\bibinfo{person}{Aditya~Krishna Menon} {and}
  \bibinfo{person}{Robert~C Williamson}.} \bibinfo{year}{2018}\natexlab{}.
\newblock \showarticletitle{{The cost of fairness in binary classification}}.
  In \bibinfo{booktitle}{\emph{Proceedings of the 1st Conference on Fairness,
  Accountability and Transparency}} \emph{(\bibinfo{series}{Proceedings of
  Machine Learning Research}, Vol.~\bibinfo{volume}{81})},
  \bibfield{editor}{\bibinfo{person}{Sorelle~A Friedler} {and}
  \bibinfo{person}{Christo Wilson}} (Eds.). \bibinfo{publisher}{PMLR},
  \bibinfo{address}{New York, NY, USA}, \bibinfo{pages}{107--118}.
\newblock
\urldef\tempurl%
\url{http://proceedings.mlr.press/v81/menon18a.html}
\showURL{%
\tempurl}


\bibitem[Miller(2015)]%
        {Millera}
\bibfield{author}{\bibinfo{person}{Claire~Cain Miller}.}
  \bibinfo{year}{2015}\natexlab{}.
\newblock \bibinfo{title}{{Can an Algorithm Hire Better Than a Human?}}
\newblock
\newblock
\urldef\tempurl%
\url{https://www.nytimes.com/2015/06/26/upshot/can-an-algorithm-hire-better-than-a-human.html}
\showURL{%
\tempurl}


\bibitem[Mitchell et~al\mbox{.}(2021)]%
        {mitchell2021algorithmic}
\bibfield{author}{\bibinfo{person}{Shira Mitchell}, \bibinfo{person}{Eric
  Potash}, \bibinfo{person}{Solon Barocas}, \bibinfo{person}{Alexander
  D'Amour}, {and} \bibinfo{person}{Kristian Lum}.}
  \bibinfo{year}{2021}\natexlab{}.
\newblock \showarticletitle{{Algorithmic Fairness: Choices, Assumptions, and
  Definitions}}.
\newblock \bibinfo{journal}{\emph{Annual Review of Statistics and Its
  Application}} \bibinfo{volume}{8}, \bibinfo{number}{1} (\bibinfo{date}{mar}
  \bibinfo{year}{2021}), \bibinfo{pages}{141--163}.
\newblock
\showISSN{2326-8298}
\urldef\tempurl%
\url{https://doi.org/10.1146/annurev-statistics-042720-125902}
\showDOI{\tempurl}


\bibitem[Murphy(2012)]%
        {murphy2012machine}
\bibfield{author}{\bibinfo{person}{Kevin~P Murphy}.}
  \bibinfo{year}{2012}\natexlab{}.
\newblock \bibinfo{booktitle}{\emph{{Machine learning: a probabilistic
  perspective}}}.
\newblock \bibinfo{publisher}{MIT press}.
\newblock


\bibitem[Narayanan(2018)]%
        {narayanan2018translation}
\bibfield{author}{\bibinfo{person}{Arvind Narayanan}.}
  \bibinfo{year}{2018}\natexlab{}.
\newblock \showarticletitle{{Translation tutorial: 21 fairness definitions and
  their politics}}. In \bibinfo{booktitle}{\emph{Proc. Conf. Fairness
  Accountability Transp., New York, USA}}.
\newblock


\bibitem[Pedregosa et~al\mbox{.}(2011)]%
        {scikit-learn}
\bibfield{author}{\bibinfo{person}{F Pedregosa}, \bibinfo{person}{G Varoquaux},
  \bibinfo{person}{A Gramfort}, \bibinfo{person}{V Michel}, \bibinfo{person}{B
  Thirion}, \bibinfo{person}{O Grisel}, \bibinfo{person}{M Blondel},
  \bibinfo{person}{P Prettenhofer}, \bibinfo{person}{R Weiss},
  \bibinfo{person}{V Dubourg}, \bibinfo{person}{J Vanderplas},
  \bibinfo{person}{A Passos}, \bibinfo{person}{D Cournapeau},
  \bibinfo{person}{M Brucher}, \bibinfo{person}{M Perrot}, {and}
  \bibinfo{person}{E Duchesnay}.} \bibinfo{year}{2011}\natexlab{}.
\newblock \showarticletitle{{Scikit-learn: Machine Learning in {P}ython}}.
\newblock \bibinfo{journal}{\emph{Journal of Machine Learning Research}}
  \bibinfo{volume}{12} (\bibinfo{year}{2011}), \bibinfo{pages}{2825--2830}.
\newblock


\bibitem[Pessach and Shmueli(2020)]%
        {pessach2020algorithmic}
\bibfield{author}{\bibinfo{person}{Dana Pessach} {and} \bibinfo{person}{Erez
  Shmueli}.} \bibinfo{year}{2020}\natexlab{}.
\newblock \showarticletitle{{Algorithmic fairness}}.
\newblock \bibinfo{journal}{\emph{arXiv preprint arXiv:2001.09784}}
  (\bibinfo{year}{2020}).
\newblock


\bibitem[Pleiss et~al\mbox{.}(2017)]%
        {Pleiss2017}
\bibfield{author}{\bibinfo{person}{Geoff Pleiss}, \bibinfo{person}{Manish
  Raghavan}, \bibinfo{person}{Felix Wu}, \bibinfo{person}{Jon Kleinberg}, {and}
  \bibinfo{person}{Kilian~Q Weinberger}.} \bibinfo{year}{2017}\natexlab{}.
\newblock \showarticletitle{{On Fairness and Calibration}}. In
  \bibinfo{booktitle}{\emph{Proceedings of the 31st International Conference on
  Neural Information Processing Systems}}. \bibinfo{publisher}{Curran
  Associates Inc.}, \bibinfo{pages}{5684--5693}.
\newblock


\bibitem[Simoiu et~al\mbox{.}(2017)]%
        {Simoiu2017}
\bibfield{author}{\bibinfo{person}{Camelia Simoiu}, \bibinfo{person}{Sam
  Corbett-Davies}, {and} \bibinfo{person}{Sharad Goel}.}
  \bibinfo{year}{2017}\natexlab{}.
\newblock \showarticletitle{{The problem of infra-marginality in outcome tests
  for discrimination}}.
\newblock \bibinfo{journal}{\emph{The Annals of Applied Statistics}}
  \bibinfo{volume}{11}, \bibinfo{number}{3} (\bibinfo{date}{sep}
  \bibinfo{year}{2017}), \bibinfo{pages}{1193--1216}.
\newblock
\showISSN{1932-6157}
\urldef\tempurl%
\url{https://doi.org/10.1214/17-AOAS1058}
\showDOI{\tempurl}


\bibitem[Valera et~al\mbox{.}(2018)]%
        {valera2018enhancing}
\bibfield{author}{\bibinfo{person}{Isabel Valera}, \bibinfo{person}{Adish
  Singla}, {and} \bibinfo{person}{Manuel {Gomez Rodriguez}}.}
  \bibinfo{year}{2018}\natexlab{}.
\newblock \showarticletitle{{Enhancing the Accuracy and Fairness of Human
  Decision Making}}. In \bibinfo{booktitle}{\emph{Advances in Neural
  Information Processing Systems}},
  \bibfield{editor}{\bibinfo{person}{S~Bengio}, \bibinfo{person}{H~Wallach},
  \bibinfo{person}{H~Larochelle}, \bibinfo{person}{K~Grauman},
  \bibinfo{person}{N~Cesa-Bianchi}, {and} \bibinfo{person}{R~Garnett}} (Eds.),
  Vol.~\bibinfo{volume}{31}. \bibinfo{publisher}{Curran Associates, Inc.}
\newblock
\urldef\tempurl%
\url{https://proceedings.neurips.cc/paper/2018/file/0a113ef6b61820daa5611c870ed8d5ee-Paper.pdf}
\showURL{%
\tempurl}


\bibitem[Verma and Rubin(2018)]%
        {verma2018fairness}
\bibfield{author}{\bibinfo{person}{Sahil Verma} {and} \bibinfo{person}{Julia
  Rubin}.} \bibinfo{year}{2018}\natexlab{}.
\newblock \showarticletitle{{Fairness Definitions Explained}}. In
  \bibinfo{booktitle}{\emph{Proceedings of the International Workshop on
  Software Fairness}} \emph{(\bibinfo{series}{FairWare '18})}.
  \bibinfo{publisher}{Association for Computing Machinery},
  \bibinfo{address}{New York, NY, USA}, \bibinfo{pages}{1--7}.
\newblock
\showISBNx{9781450357463}
\urldef\tempurl%
\url{https://doi.org/10.1145/3194770.3194776}
\showDOI{\tempurl}


\bibitem[Wang et~al\mbox{.}(2011)]%
        {wang2011click}
\bibfield{author}{\bibinfo{person}{Xuerui Wang}, \bibinfo{person}{Wei Li},
  \bibinfo{person}{Ying Cui}, \bibinfo{person}{Ruofei Zhang}, {and}
  \bibinfo{person}{Jianchang Mao}.} \bibinfo{year}{2011}\natexlab{}.
\newblock \showarticletitle{{Click-through rate estimation for rare events in
  online advertising}}.
\newblock In \bibinfo{booktitle}{\emph{Online multimedia advertising:
  Techniques and technologies}}. \bibinfo{publisher}{IGI Global},
  \bibinfo{pages}{1--12}.
\newblock


\bibitem[Zafar et~al\mbox{.}(2017)]%
        {10.1145/3038912.3052660}
\bibfield{author}{\bibinfo{person}{Muhammad~Bilal Zafar},
  \bibinfo{person}{Isabel Valera}, \bibinfo{person}{Manuel {Gomez Rodriguez}},
  {and} \bibinfo{person}{Krishna~P Gummadi}.} \bibinfo{year}{2017}\natexlab{}.
\newblock \showarticletitle{{Fairness Beyond Disparate Treatment \& Disparate
  Impact: Learning Classification without Disparate Mistreatment}}. In
  \bibinfo{booktitle}{\emph{Proceedings of the 26th International Conference on
  World Wide Web}} \emph{(\bibinfo{series}{WWW '17})}.
  \bibinfo{publisher}{International World Wide Web Conferences Steering
  Committee}, \bibinfo{address}{Republic and Canton of Geneva, CHE},
  \bibinfo{pages}{1171--1180}.
\newblock
\showISBNx{9781450349130}
\urldef\tempurl%
\url{https://doi.org/10.1145/3038912.3052660}
\showDOI{\tempurl}


\end{thebibliography}

\appendix

\section{Utility-weighted confusion matrix}
\label{appendix:Utility-weighted-confusion-matrix}

Table~\ref{tab:utilityconfusionmatrix} shows a confusion matrix with the parameters ($u_{11}$, $u_{12}$, $u_{21}$, and $u_{22}$) used to weight the possible outcomes.
\begin{table}[ht]
  \caption{Confusion matrix weighted with utilities}
  \label{tab:utilityconfusionmatrix}
  \begin{tabular}{ccl}
                               & $Y=1$                         & $Y=0$                          \\ \cline{2-3}
    \multicolumn{1}{l|}{$D=1$} & \multicolumn{1}{l|}{$u_{11}$} & \multicolumn{1}{l|}{$u_{12}$}  \\ \cline{2-3}
    \multicolumn{1}{l|}{$D=0$} & \multicolumn{1}{l|}{$u_{21}$} & \multicolumn{1}{l|}{$u_{22}$}  \\ \cline{2-3}
\end{tabular}
\end{table}

\section{Proof of Lemma~\ref{lemma:case_I_PPV}}
\label{appendix:proof-lemma-case_I_PPV}

Recall that the number of individuals that are assigned a positive decision ($n_{D=1}$) can be written as $\sum_{i \in S} d_i$ and that the positive predictive value (PPV) is defined as:
\begin{displaymath}
PPV = P[Y=1|D=1] = \frac{1}{n_{D=1}} \sum_{i \in S} p_i d_i.
\end{displaymath}
Suppose that $n_{D=1}$ is predefined and that the $PPV$ is given.
For binary group membership $A$, the total utility $\tilde U$ can be written as:
\begin{displaymath}
\begin{split}
\tilde U & = \sum_{i \in S} \beta d_i + (\alpha - \beta) \sum_{i \in S} p_i d_i\\
& = \beta n_{D=1} + (\alpha - \beta) (PPV n_{D=1})\\
& = \beta n_{D=1} + \alpha PPV n_{D=1} - \beta PPV n_{D=1} \\
& = (\alpha PPV + \beta (1-PPV) ) n_{D=1}.
\end{split}
\end{displaymath}
Thus, for a given $PPV$ and a predefined selection capacity $n_{D=1}$, the total utility is given, and any decision rule that satisfies the constraint in Equation~\ref{equation:optimizationproblem_equivalent_with_PPV} is optimal.

\section{Optimal Decision Rules under FOR Parity}
\label{appendix:for_parity}

\renewcommand{\theequation}{C.\arabic{equation}}

In this section, we present the optimal solution for a utility-maximizing decision maker that wants to satisfy the group fairness metric false omission rate (FOR) parity.
The FOR is defined as the average probability of individuals with $D=0$ to have $Y=1$, which can be written as $\frac{1}{n_{D=0}} \sum\limits_{i \in S} p_i (1-d_i)$, where $d_i$ is a binary multiplier representing the decision that is made for an individual $i$.
The fairness definition FOR parity requires this value to be the same across groups.
In the following, we interpret the decision problem as a \emph{selection problem}, however, this time denoting individuals with $D=0$ as ``being selected.''

Again, the solution is also composed of two consecutive steps.
First, we derive the optimal decision rules $d^{\ast}$ for a simplified constraint: We assume that the FOR of both groups must be equal to a predefined value $FOR_t \in [0, 1]$.
Then, we solve the full optimization problem by maximizing the decision maker's utility over all possible values of $FOR_t$.

We can translate this optimization problem into an equivalent problem, defining $\hat u_i$ as the {\em relative utility gain} when switching the decision from $D=1$ to $D=0$.
So, $\hat u_i =0$ for $D=1$, and $\hat u_i = -\alpha p_i - \beta (1-p_i)$ for $D=0$.
Thus, the constrained optimization problem has the form:
\begin{equation}
\begin{split}
&\argmax_d \;\;\;\; \hat U=\sum_{i \in S} \hat u_i (1-d_i) \\
&\text{subject to} \;\;\;\; \frac{1}{n_{A=a|D=0}} \sum_{j \in S_a} p_j (1-d_j) = FOR_t, \;\; \text{for } FOR_t \in [0,1] ,
\end{split}
\label{equation:optimizationproblem_equivalent_FOR}
\end{equation}%
where $S_a$ is the set of all individuals of group $a$, $n_{A=a|D=0}$ denotes the number of individuals in group $a$ with $D=0$.
Hence, the constraint describes a parity of the two groups' FORs.
Since the FOR can only be defined if at least one individual is selected, we assume $n_{A=a|D=0}\geq1$ for each group.
The solution for the optimal decision rules while satisfying FOR parity across groups is analogous to the one under positive predictive value (PPV) parity (see Lemma~\ref{lemma:case_I_PPV} and Theorem~\ref{theorem:mainresult_optimalsolutionwiththresholdrule} in the Section~\ref{ssec:ppv_parity}).

We first analyze case I with a simplified fairness constraint, where we assume that the FOR of both groups must be equal to a predefined value between 0 and 1, denoted by $FOR_t$.
Suppose that $n_{D=1}$ is predefined, which is equivalent to $n_{D=0}$ being predefined.
Similar to Lemma~\ref{lemma:case_I_PPV} (along with its proof in Appendix~\ref{appendix:Utility-weighted-confusion-matrix}), the total utility for an optimal solution satisfying FOR parity is:
\begin{equation}
\hat U = (-\alpha FOR_t - \beta (1-FOR_t) ) n_{D=0}.
\label{equation:rearrangedpredefined_FOR}
\end{equation}
Following similar reasoning as in the case of PPV parity, we end up with a conceptually identical solution for case I, where $n_{D=1}$ is predefined.
Namely, for a given $FOR_t$, the total utility $\hat U$ is the same for any solution under FOR parity.
Any decision rule $d(p,a)$ with $n_{D=1}$ that satisfies the constraint stated in Equation~\ref{equation:optimizationproblem_equivalent_FOR} for a given $FOR_t$ is optimal.
We thus end up with two independent selection problems, one for each group, which consists of finding a selection of individuals characterized by the fact that their average probability equals $FOR_t$.
For each group $a$, selections with different numbers $n_{A=a|D=0}$ are possible.
As long as the predefined $n_{D=0}$ is met, the group membership of the selected individuals does not matter for the resulting total utility.
Hence, there may be several solutions to the optimization problem that differ regarding the number of individuals selected per group (i.e., representing different combinations of ($n_{A=0|D=0},n_{A=1|D=0}$)), with $n_{A=0|D=0}+n_{A=1|D=0} = n_{D=0}$.
Note that most of these solutions violate the group fairness metric statistical parity while still meeting the fairness criterion of FOR parity.

\begin{figure*}
\centering
\begin{subfigure}[t]{0.29\textwidth}
    \centering
    \includegraphics[width=\textwidth]{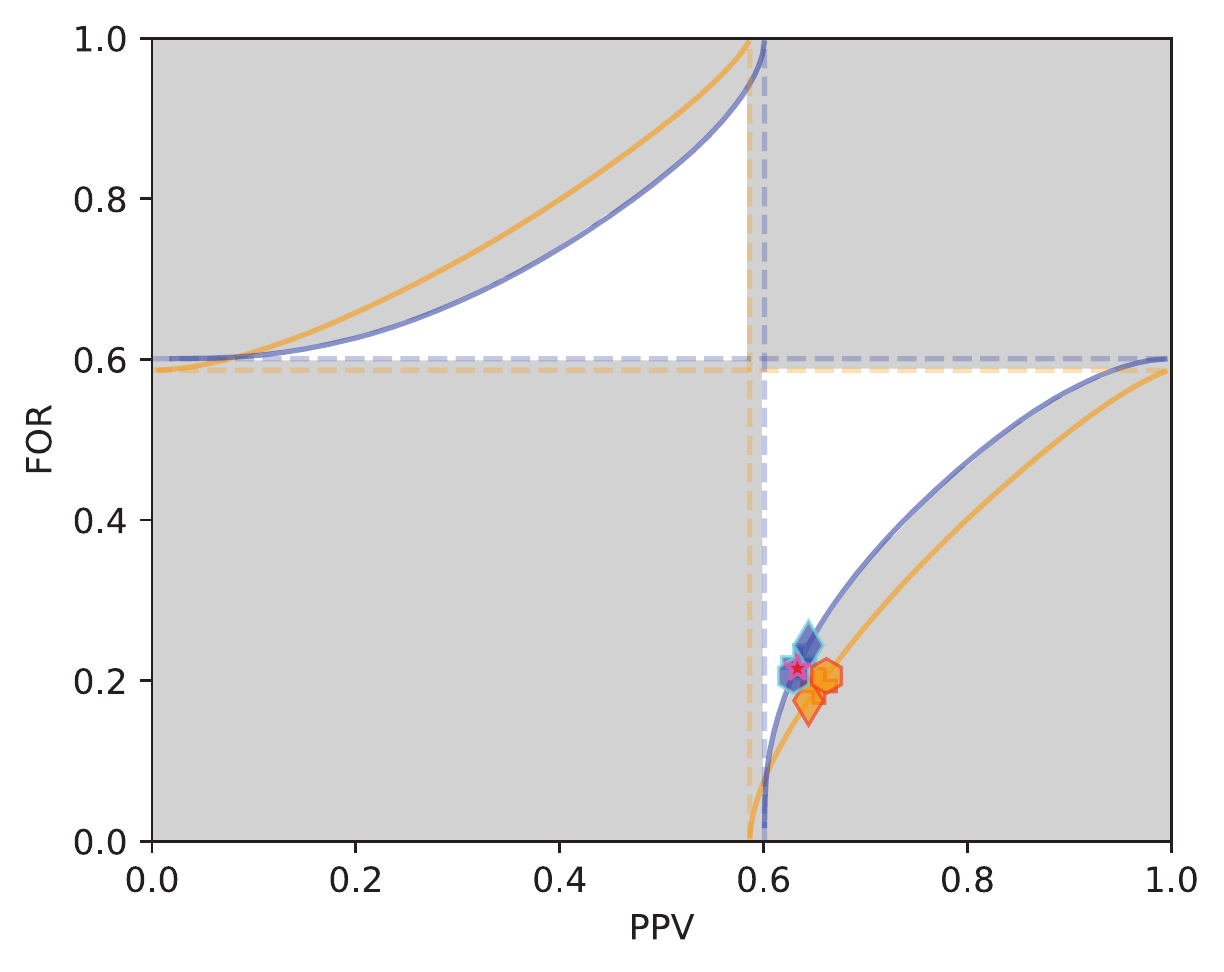}
    \caption{Population 1}
    \label{fig:sufficiency-Population-1}
\end{subfigure}
\hfill
\begin{subfigure}[t]{0.29\textwidth}
    \centering
    \includegraphics[width=\textwidth]{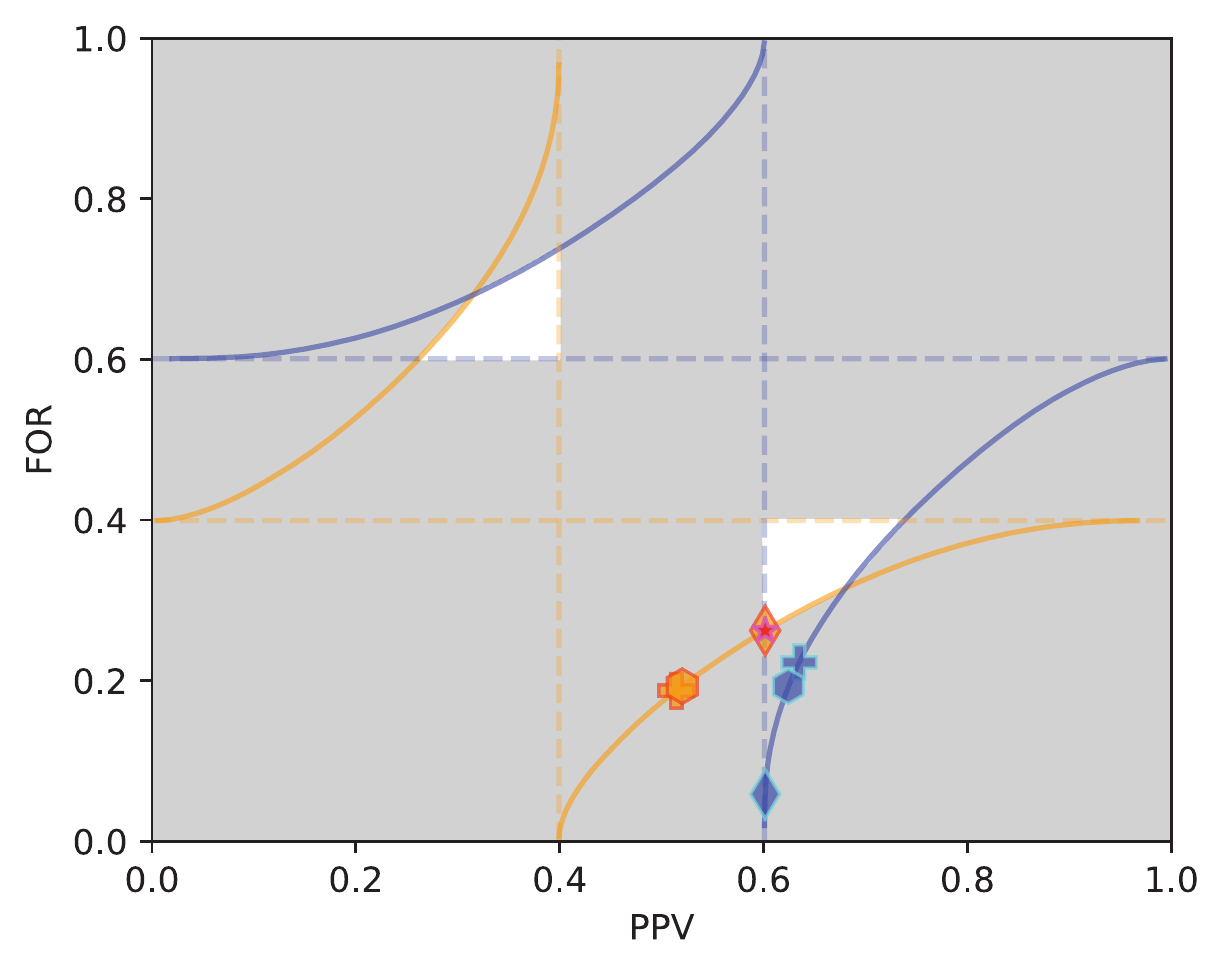}
    \caption{Population 2}
    \label{fig:sufficiency-Population-2}
\end{subfigure}
\hfill
\begin{subfigure}[t]{0.392\textwidth}
    \centering
    \includegraphics[width=\textwidth]{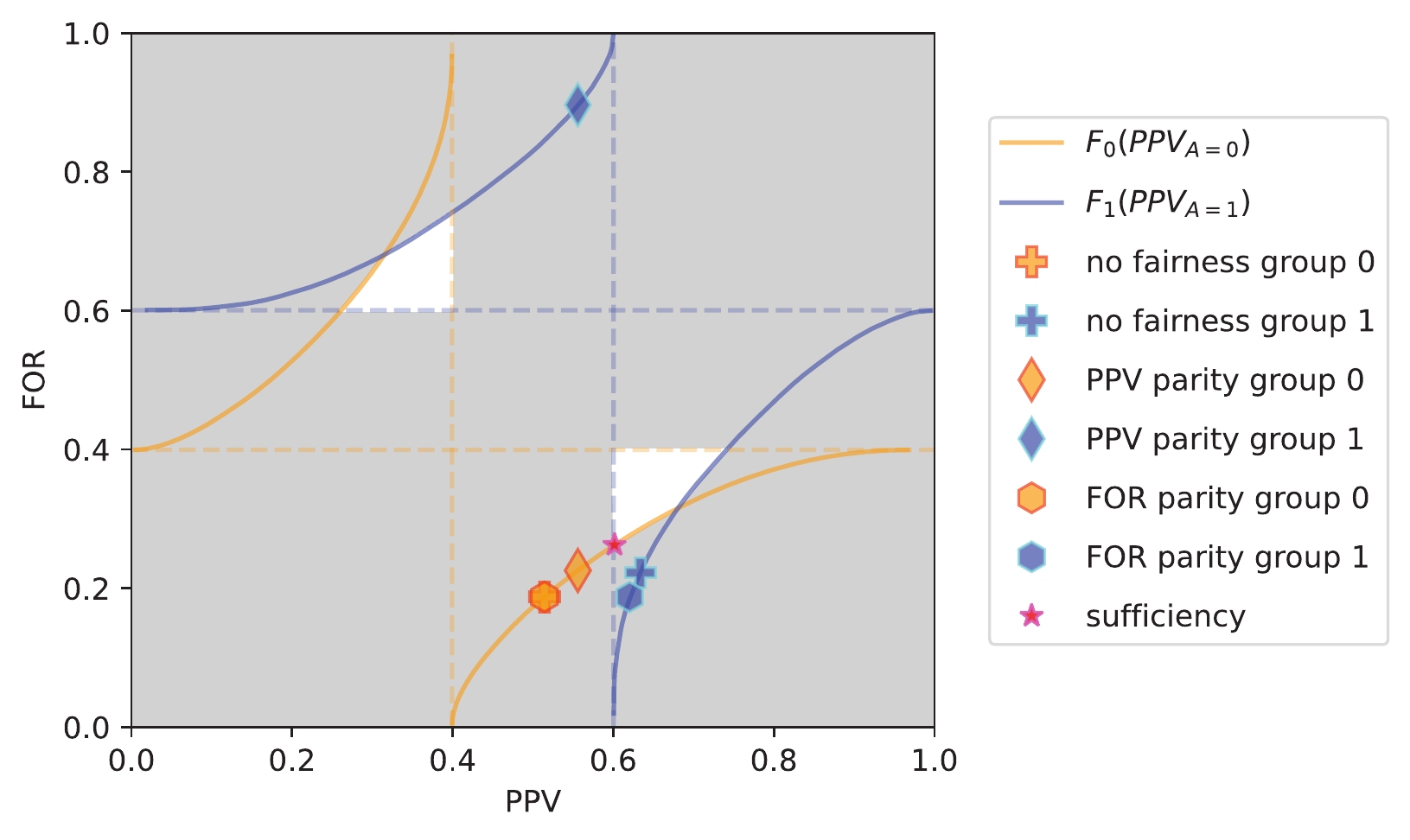}
    \caption{Population 3$\;\;\;\;\;\;\;\;\;\;\;\;\;\;\;\;\;\;\;\;\;\;\;\;$ }
    \label{fig:sufficiency-Population-3}
\end{subfigure}
\caption{Utility maximization under sufficiency or one of its relaxations for three different populations (synthetic)}
\label{fig:sufficiency}
\end{figure*}

\balance

We now analyze case II, where $n_{D=1}$ is not predefined.
Equation~\ref{equation:rearrangedpredefined_FOR} directly shows that for values $FOR_t$ for which $-\alpha FOR_t - \beta (1-FOR_t) <0$, a decision maker who wants to maximize the total utility should minimize $n_{D=0}$, thus assigning the decision $D=0$ only to one individual from each group, yielding a total utility of $\hat U = 2 (-\alpha FOR_t - \beta (1-FOR_t) )$ for a binary protected attribute.
In the following, we thus assume that $-\alpha FOR_t - \beta (1-FOR_t)>0$.
Again we assume that the size of both groups is large but finite.
Above we showed that, under these assumptions, the decision maker's goal is to find the selection that satisfies the constraint $FOR=FOR_t$ with the maximum $n_{D=0}$.
Similarly to the solution for PPV parity (see Theorem~\ref{theorem:mainresult_optimalsolutionwiththresholdrule}), the total utility $\hat U$ is maximized with decision rules $d^{\ast}$ of the following form when applying FOR parity as a fairness constraint:
\begin{equation}
d^{\ast}_i=\begin{cases}
\begin{rcases}
0, & \text{for $p_i \geq \tau_a$} \\
1, & \text{otherwise}\\
\end{rcases}
\text{for $FOR_t>BR_{A=a}$} \\
\begin{rcases}
0, & \text{for $p_i \leq \tau_a$} \\
1, & \text{otherwise}
\end{rcases}
\text{for $FOR_t<BR_{A=a}$},
\end{cases}
\label{equation:d_rule_notpredefined_FOR}
\end{equation}
where $\tau_a$ denote different group-specific constants and $BR_{A=a}$ denotes group $a$'s base rate (BR) which is defined as the ratio of individuals belonging to the positive class ($Y=1$): $BR_{A=a} = P[Y=1|A=a] = \frac{1}{n_{A=a}} \sum_{i \in S_a} p_i$.
The only difference to the solution under PPV parity (see Equation~\ref{equation:d_rule_notpredefined_PPV}) is the fact that decision rules of this form focus on the optimal selection of individuals with $D=0$.

Finally, we perform the second step of the solution: from a discretization of all $FOR$, for which a solution exists, we choose the one that (in combination with the corresponding $n_{D=0}$) maximizes the total utility.
Thereby, every $n_{D=0}$ is composed of the optimal selections $n_{A=a|D=0}$ for all groups $a \in A$, as elaborated in the first step of the solution.

\section{Addendum to the Synthetic Data Example}
\label{appendix-sec:synthetic}

Here we present additional results for the synthetic data example.
Based on the three populations we introduced in Section~\ref{subsec:Synthetic-Data-Example}, we show the resulting PPV and FOR for a utility-maximizing decision maker who want to satisfy PPV parity, FOR parity, or sufficiency.%
\footnote{Data and code to reproduce our results are available at \href{https://github.com/joebaumann/fair-prediction-based-decision-making}{https://github.com/joebaumann/fair-prediction-based-decision-making}.}

Figure~\ref{fig:sufficiency} visualizes the three populations' solution spaces containing possible PPV-FOR combinations and the solutions that maximize utility -- with or without the (relaxed) fairness constraints.
Absent any fairness constraint, it is optimal for the bank to grant a loan to all individuals whose $p>t_0=0.3$.
This is indicated with a group-specific cross (orange for group 0 and blue for group 1) in the Figures~\ref{fig:sufficiency-Population-1}-\ref{fig:sufficiency-Population-3}.

In populations 1 and 2, $PPV > max(BR_{A=a})$ and $FOR < min(BR_{A=a})$, thus, individuals with higher probability are preferred to those with lower probability.
However, in population 3, the upper-bound threshold used for group 1 leads to $PPV_{t1,t2}<BR_{A=1}$ and $FOR_{t1,t2}>BR_{A=1}$ under PPV parity.
The optimal combinations of $PPV$ and $FOR$ lie on $F_0(PPV_{A=0})$, representing optimal PPV-FOR combinations as introduced in Section~\ref{sec:sufficiency}, for group 0 (on $F_1(PPV_{A=1})$ for group 1), both for the solutions satisfying either PPV parity or FOR parity and for the solution without any fairness constraint.
As we can clearly see in all three populations, one group needs to deviate from their optimal PPV-FOR combination to satisfy sufficiency, resulting in within-group unfairness.
For all three populations in Figure~\ref{fig:sufficiency}, the advantaged group 1 is the one that deviates in order to maximize utility while satisfying sufficiency.
However, this is not always the case as it depends on the utility function.%
\footnote{
There are situations in which the disadvantaged group must deviate, as can be seen with those parts of the solution space that are bounded by $F_1(PPV)$.
}

\end{document}